\theoremstyle{definition} \newtheorem{lemma}{Lemma}
\begin{document}

\title{High-Mobility Wideband Massive MIMO Communications: Doppler Compensation, Analysis and Scaling Law}

\author{Wei Guo, Weile Zhang, Pengcheng Mu, Feifei Gao, and Hai Lin
\thanks{\hspace{-0.0cm}
Part of this work has already been published in~\cite{Guo17Angle}.

W. Guo is with the State Key Lab of ISN, Xidian University, Xi'an, Shaanxi, 710071, China (e-mail: w.guo@xidian.edu.cn).

W. Zhang and P. Mu are with the School of Electronic and Information Engineering, Xi'an Jiaotong University, Xi'an, Shaanxi, 710049, China (e-mail: wlzhang@mail.xjtu.edu.cn, pcmu@mail.xjtu.edu.cn).

F. Gao is with the Tsinghua University, Beijing, 100084, China (e-mail: feifeigao@ieee.org).

H. Lin is with the Department of Electrical and Information Systems, Osaka Prefecture University, Osaka, Japan (email: lin@eis.osakafu-u.ac.jp).}
}
\maketitle

\vspace*{-0.2in}
\begin{abstract}
In this paper, we apply angle-domain Doppler compensation for high-mobility wideband massive multi-input multi-output (MIMO) uplink transmission. The time-varying multipath channel is considered between high-speed terminal and static base station (BS), where multiple Doppler frequency offsets (DFOs) are associated with distinct angle of departures (AoDs).
With the aid of the large-scale uniform linear array (ULA) at the transmitter, we design a beamforming network to generate multiple parallel beamforming branches, each transmitting signal pointing to one particular angle. Then, the transmitted signal in each branch will experience only one dominant DFO when passing over the time-varying channel, which can be easily compensated before transmission starts.
We theoretically analyze the Doppler spread of the equivalent uplink channel after angle-domain Doppler compensation, which takes into account both the mainlobe and sidelobes of the transmit beam in each branch. It is seen that the channel time-variation can be effectively suppressed if the number of transmit antennas is sufficiently large. Interestingly, the asymptotic scaling law of channel variation is obtained, which shows that the Doppler spread is proportional to the maximum DFO and decreases approximately as $1/\sqrt{M}$ ($M$ is the number of transmit antennas) when $M$ is sufficiently large. Numerical results are provided to corroborate the proposed scheme.
\end{abstract}

\begin{IEEEkeywords}
high-mobility, wideband massive MIMO, time-varying channel, Doppler spread, scaling law.
\end{IEEEkeywords}

\section{Introduction}
High-mobility communications have gained much interest over the past couple of decades and have been considered as one of the key parts in high-speed railway (HSR)~\cite{Zhu13TDD,Wu16A,He16High,Wang16Channel}.
Compared with the conventional communications with low-mobility or static users, high-mobility communications face many new challenges, such as fast time-varying channels, frequent handovers and complex channel environments like viaducts, tunnels and mountain areas. The time variation of the fading channel is mainly caused by the Doppler frequency offsets (DFOs) arising from relative motion between the transceivers~\cite{Hlawatsch11Wireless}. The DFOs not only increase the complexity of channel estimation and equalization at the receiver but also introduce inter-carrier interference (ICI) when orthogonal frequency division multiplexing (OFDM) technique is applied in high-mobility wideband wireless communications~\cite{Hwang09OFDM,Wu16A}.

How to combat the DFOs has been widely studied in the literature. Since multiple DFOs have already been mixed together at the receiver, some works directly estimate the composite time-varying channel in either time or frequency domain~\cite{Giannakis98Basis,Wang18Channel,Berger10Application}. Considering that the multiple DFOs are closely associated with the angle of arrivals (AoAs) or angle of departures (AoDs) of multipaths, some pioneer works have been reported to mitigate the effect of the DFOs from spatial domain via small-scale antenna arrays~\cite{Zhang11Multiple,Guo13Multiple,Yang13Beamforming,Ng03Effect}.
In~\cite{Zhang11Multiple} and~\cite{Guo13Multiple}, the sparse channel is assumed, and the DFOs of the dominant multipaths are compensated with perfect knowledge of maximum DFO and angle information. Similarly, the DFOs of the line-of-sight (LoS) path from different base stations (BSs) are compensated via AoAs estimation in~\cite{Yang13Beamforming}. 
However, most of these works consider the sparse channel with few multipaths. When there are a large number of multipaths between transceivers, these methods may fail due to limited spatial resolution.

Recently, large-scale multi-input multi-output (MIMO), known as ``massive MIMO'', has been deemed as one of the key techniques in the fifth generation (5G) systems~\cite{Lu14An,Xie17Unified,Tan17Spectral,Zhao18Time,Lin17A}. In a typical massive MIMO system, a large-scale antenna array is configured at the transmitter or receiver to provide many additional benefits such as high-spatial resolution and high-spectral efficiency.
Some works have already demonstrated the potential of massive MIMO to deal with high-mobility related challenges \cite{Liu14On,Chen17Directivity,Fan15Doppler}. 
In~\cite{Liu14On}, the large-scale antenna arrays are employed at both the BS and the train to maximize the uplink capacity in HSR communications.
In~\cite{Chen17Directivity}, the massive MIMO beamforming schemes are presented with the location information for HSR transmissions. However, these works ignore the effect of multiple DFOs and are mainly applied for scenarios with dominant LoS, e.g., viaduct scenarios.
A joint estimation scheme for both AoA and DFO is proposed in~\cite{Fan15Doppler} via the large-scale receive antennas. Unfortunately, the estimator in~\cite{Fan15Doppler} works with single path single DFO and cannot be directly extended to multipath scenarios.

To solve multiple DFOs under high-mobility, Chizhik~\cite{Chizhik04Slowing} first pointed out that channel time-variation can be slowed down through beamforming provided by a large number of antennas. Based on this observation, some works have investigated the frequency synchronization problems from the angle-domain signal processing~\cite{Guo17High,Ge17High,You17BDMA,Zhang18Frequency}.
By exploiting the high-spatial resolution provided by a large-scale antenna array, the work in~\cite{Guo17High} proposes a systemic receiver design for high-mobility downlink transmission with DFOs estimation and compensation. The work in \cite{Ge17High} further studies the impact of array error on designing angle-domain DFO compensation. 
In~\cite{You17BDMA}, per-beam synchronization in time and frequency has been applied to the millimeter-wave (mmW)/Terahertz (THz) transmissions with a large number of antennas, where Doppler spread analysis is provided under the ideal assumption that the signals have been constrained within the narrow beams without any leakage.


In this paper, we consider the high-mobility uplink wideband transmission. After analyzing the main difference between downlink and uplink transmissions, we apply Doppler compensation in angle domain at the transmitter for high-mobility OFDM uplink.
Thanks to the high-spatial resolution provided by a large-scale uniform linear array (ULA), we generate multiple parallel branches through matched filter beamforming. Since the transmitted signal in each branch is affected by one dominant DFO, we can easily compensate the DFO before transmission starts. Based on this mechanism, we theoretically analyze the Doppler spread for the equivalent uplink channel when considering the effect from both mainlobe and sidelobes of the beam. The asymptotic scaling law of channel variation is provided, which shows that: 1) the Doppler spread is proportional to the maximum DFO; 2) the Doppler spread decreases approximately as $1/\sqrt{M}$ ($M$ is the number of transmit antennas) with increasing $M$ when $M$ is sufficiently large. Simulation results also show the effectiveness of the proposed scheme.

The rest of this paper is organized as follows. In Section II, the time-varying channel and signal models for high-mobility uplink transmission are introduced. In Section III, a new transmission scheme for DFO compensation is developed. The Doppler spread for the equivalent uplink channel is analyzed in Section IV. Simulation results are provided in Section V, and Section VI concludes this paper.

\emph{Notations}: $(\cdot)^\mathrm{H}$, $(\cdot)^\mathrm{T}$, $(\cdot)^\ast$, and $\mathrm{E}\{\cdot\}$ represent the Hermitian, transposition, conjugate and expectation operations, respectively. $\|\cdot\|$ denotes the Euclidean norm of a vector, whereas $|\cdot|$ denotes the absolute value of a scalar. $\lfloor\cdot\rfloor$ and $\lceil\cdot\rceil$ denote the integer floor and integer ceiling, respectively.

\section{System Model}\label{sec:ULSysMod}
Consider the scenarios of high-mobility uplink transmission, as shown in Fig.~\ref{fig:ULSysMod}, where a relay station (RS) is mounted on top of the train for decoding and forwarding the data between users and the BS~\cite{He16High}. The RS can avoid the high penetration loss caused by the train carriages and can perform the handovers in a group manner. Since the RS-to-BS link faces severe DFOs while the user-to-RS link is less affected by DFOs, we only consider the RS-to-BS link in this paper.
At the RS, a large-scale ULA is configured along the direction of the motion, i.e., X-axis in Fig.~\ref{fig:ULSysMod}.
\begin{figure}[t!]
\centering
\includegraphics[scale=0.35]{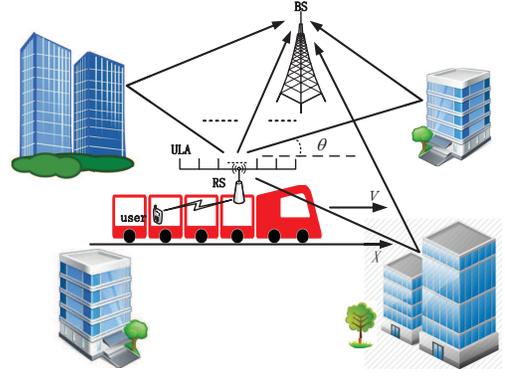}
\caption{High-mobility uplink transmission with rich reflectors.}
\label{fig:ULSysMod}
\vspace*{-0.0in}
\end{figure}

\vspace*{-0.0in}
\subsection{Time-Varying Multipath Channel Model}\label{ssec:ULChnMod}
We consider the transmission scenarios with rich reflectors, such as urban districts, mountain areas, and tunnels. Fig.~\ref{fig:ULSysMod} shows an example of the uplink transmission in urban district: A moving train travels through various buildings and the transmitted signal spreads from the RS to the BS through multipaths. 
Assume that the large-scale ULA has $M$ transmit antennas and the conventional small-scale antenna array with $N$ receive antennas is configured at the BS. Take the channel from the $n_t$th transmit antenna to the $n_r$th receive antenna as an example. The baseband time-varying multipath channel for the antenna pair $\{n_r,n_t\}$ can be modeled as
\begin{equation}\label{eq:h_chn}
  h_{n_r,n_t}\left(n,n'\right) = \sum\limits_{l=1}^{L}{g_{n_r,n_t}(l,n-d_l)\delta\left(n'-d_l\right)},
\end{equation}
where $L$ is the total number of channel taps with different delays, $d_l$ is the relative delay of the $l$th tap, and $g_{n_r,n_t}(l,n)$ is the corresponding complex amplitude for the antenna pair $\{n_r,n_t\}$.
To characterize the scenarios with rich reflectors, the Jakes' channel model has been widely used in the literature~\cite{Jakes94Microwave} and an established simulator has been proposed in~\cite{Zheng03Simulation}, where each tap is comprised of $N_p$ propagation paths. The equivalent model for $g_{n_r,n_t}(l,n)$ is given by
\begin{equation}\label{eq:h_tap}
  g_{n_r,n_t}(l,n) = \sum\limits_{q=1}^{N_p} {\alpha_{l,q} e^{j[2\pi{f_d}n{T_s}\cos{\theta_{l,q}}+\psi_{n_t}(\theta_{l,q})+\psi'_{n_r}(\vartheta_{l,q})]} },
\end{equation}
where $\alpha_{l,q}$ is the random complex path gain associated with the $q$th propagation path in the $l$th tap, and $T_s$ is the sampling interval. Here, $\psi_{n_t}(\theta_{l,q})$ and $\psi'_{n_r}(\vartheta_{l,q})$ represent the phase shifts induced at the $n_t$th transmit antenna and the $n_r$th receive antenna, respectively, which depend on the antenna structure, position, and the direction of the path. Moreover, $\theta_{l,q}$ and $\vartheta_{l,q}$ are the AoD and AoA, respectively, of the $q$th path in the $l$th tap relative to the moving direction, which are randomly distributed between 0 and $\pi$. In \eqref{eq:h_tap}, $f_d$ is the maximum DFO defined as $f_d=v/\lambda$, where $v$ is the speed of moving terminal, and $\lambda$ is the wavelength of carrier wave. Denote $f_{l,q}$ as the DFO for the $q$th path in the $l$th tap. It is seen that $f_{l,q}$ is determined by the AoD $\theta_{l,q}$ and the maximum DFO $f_d$, i.e., $f_{l,q}={f_d}\cos{\theta_{l,q}}$.

Take the first transmit antenna as reference. Then, the phase shift, $\psi_{n_t}(\theta_{l,q})$, for the ULA can be expressed as
\begin{equation}
  \psi_{n_t}(\theta_{l,q})=2\pi(n_t-1)d_t\cos(\theta_{l,q})/\lambda,
\end{equation}
where $d_t$ is the transmit antenna element spacing. The steering vector for the whole transmit antenna array at the AoD $\theta_{l,q}$ is then determined as
\begin{equation}
  \mathbf{a}_t(\theta_{l,q})=[e^{j\psi_1(\theta_{l,q})},\cdots,e^{j\psi_{M}(\theta_{l,q})}]^\mathrm{T}.
\end{equation}



In a high-mobility environment, $g_{n_r,n_t}(l,n)$ varies with time index $n$ due to the significant DFOs. Each path in~\eqref{eq:h_tap} is assumed to have independent attenuation, phase, AoD, AoA and also DFO~\cite{Zheng03Simulation}. During the transmitting period of one OFDM frame, there is little change in the position and speed of moving terminal. Therefore, we can assume that $\alpha_{l,q}$, $\theta_{l,q}$, $\vartheta_{l,q}$, and $f_d$ are constant over the observed data frame, and may vary among different frames. Note that when there are rich scatters around the moving terminal, $N_p$ tends to be very large and the channel model \eqref{eq:h_tap} coincides with the classical Jakes' channel model~\cite{Jakes94Microwave}.

\vspace*{-0.0in}
\subsection{Signal Model}\label{ssec:ULSigMod}
Suppose each frame consists of $N_b$ OFDM blocks, and denote $\mathbf{x}_m=\left[ x_{m,0},x_{m,1},\cdots ,x_{m,N_c-1} \right]^\mathrm{T}$ as the information symbols in the $m$th OFDM block, where $N_c$ is the number of subcarriers. After applying an $N_c$-point inverse discrete Fourier transform (IDFT) operator and adding the cyclic prefix (CP) of length $N_{cp}$, the resulting time-domain samples in the $m$th block can be expressed as
\begin{equation}\label{eq:s_mn}
  s_m\left(n\right) =\frac{1}{\sqrt{N_c}}\sum\limits_{k=0}^{N_c-1}{x_{m,k}e^{j\frac{2\pi kn}{N_c}}},-N_{cp}\leq n\leq N_{c}-1.
\end{equation}


Without loss of generality, we assume the total transmit power is normalized and is equally allocated to $M$ transmit antennas. Then, the signal on the $n_t$th transmit antenna can be expressed as $\tilde{s}_{m,n_t}(n)=s_m(n)/\sqrt{M}$. We further denote $\tilde{\mathbf{s}}_{m}(n)=[\tilde{s}_{m,1}(n),\cdots,\tilde{s}_{m,M}(n)]^\mathrm{T}$ as the transmitted signal vector for the whole antenna array at the RS. Assume perfect time synchronization at the receiver. From~\eqref{eq:h_chn} and~\eqref{eq:s_mn}, the $n$th time-domain sample in the $m$th OFDM block at the $n_r$th receive antenna can be expressed as
\begin{align}\label{eq:y_mn}
  {y}_{m,n_r}(n) =& \sum\limits_{n_t=1}^{M}\sum\limits_{l=1}^{L} g_{n_r,n_t}(l,mN_s+n-d_l)\tilde{s}_{m,n_t}(n-d_l) \nonumber\\ +& {z}_{m,n_r}(n),
\end{align}
where $N_s=N_c+N_{cp}$ is the length of an OFDM block, and $z_{m,n_r}(n)$ is the corresponding time-domain sample of the complex additive white Gaussian noise (AWGN) at the $n_r$th receive antenna. 


Denote $\mathbf{y}_{m,n_r} = [y_{m,n_r}(0),\cdots,y_{m,n_r}(N_c-1)]^\mathrm{T}$ and $\mathbf{z}_{m,n_r}=[z_{m,n_r}(0),\cdots,z_{m,n_r}(N_c-1)]^\mathrm{T}$. From~\eqref{eq:y_mn}, $\mathbf{y}_{m,n_r}$ can be expressed as
\begin{equation}\label{eq:y_m}
  \mathbf{y}_{m,n_r} = \sum\limits_{l=1}^{L} \sum\limits_{q=1}^{N_p} \rho_{l,q,n_r} \mathbf{a}_t^\mathrm{T}(\theta_{l,q}) \mathbf{S}_m(d_l) \mathbf{\Phi}_m(l,q) + \mathbf{z}_{m,n_r},
\end{equation}
where $\rho_{l,q,n_r}=\alpha_{l,q}e^{j\psi'_{n_r}(\vartheta_{l,q})}$ relates to the channel gain and the phase shift, and $\mathbf{S}_{m}(d_l)=[\tilde{\mathbf{s}}_{m}(0-d_l), \cdots, \tilde{\mathbf{s}}_{m}(N_c-d_l-1)] \in \mathbb{C}^{M\times N_c}$ corresponds to the transmitted signal matrix after delay of $d_l$. Moreover, $\mathbf{\Phi}_m(l,q)$ represents the following $N_c \times N_c$ diagonal phase rotation matrix introduced by DFO $f_{l,q}$:
\begin{equation}
  \mathbf{\Phi}_m(l,q) = \mathrm{diag}\{\beta_{m,0}(l,q),\cdots,\beta_{m,N_c-1}(l,q)\},
\end{equation}
where $\beta_{m,n}(l,q)=e^{j2\pi f_{l,q}(m{N_s}+n-d_l){T_s}}$.

When there is only one propagation path from the RS to the BS, i.e., $L=1$ and $N_p=1$, a single DFO appears in~\eqref{eq:y_m} and the conventional single frequency offset estimation and compensation techniques can be applied~\cite{Moose94A}. However, the multipath channel would make multiple DFOs mixed at the BS. Note that the observed multiple DFOs at BS are now related to AoDs at the transmitter side rather than AoAs at the receiver side. Hence, it is quite difficult for BS to perform DFOs estimation and compensation even with a large-scale antenna array~\cite{Guo17High,Ge17High}.

\section{Transmitter Design for High-Mobility OFDM Uplink}\label{sec:TxDesign}

\subsection{Motivation and Transmitter Design}\label{ssec:chn_est}
Let us first illustrate why it is beneficial to exploit a large-scale ULA at the transmitter to cope with multiple DFOs during the uplink transmission. The reason lies in three aspects:

\emph{1) Multi-antenna techniques can separate multiple DFOs.} As discussed previously, the received signal at a single antenna consists of multiple DFOs. Since the DFOs are associated with AoDs, it is difficult to separate them with one single antenna in either time or frequency domain. The multi-antenna techniques can provide spatial resolution and thus have the potential to separate multiple DFOs in spatial domain. 

\emph{2) Only the transmitter can distinguish multiple DFOs in the uplink transmission.} As introduced in \cite{Guo13Multiple}, in the uplink transmission, one AoA may be associated with multiple DFOs while the AoD has a one-to-one relationship with the DFO. Therefore, the receiver cannot distinguish multiple DFOs according to the AoAs while only the transmitter can exploit the AoDs to distinguish multiple DFOs.

\emph{3) Conventional small-scale antenna arrays cannot provide enough spatial resolution.} Some previous works have adopted small-scale antenna arrays in high-mobility systems. They are efficient for sparse channels with very few multipaths between transceivers~\cite{Zhang11Multiple,Guo13Multiple,Yang13Beamforming}. However, when there are a large number of multipaths with distinct DFOs, conventional small-scale antenna arrays become powerless due to limited spatial resolution. In comparison, the large-scale antenna array could support high-spatial resolution and provide the opportunity to deal with multiple DFOs in angle domain.

\begin{figure}[t!]
\centering
\includegraphics[scale=0.55]{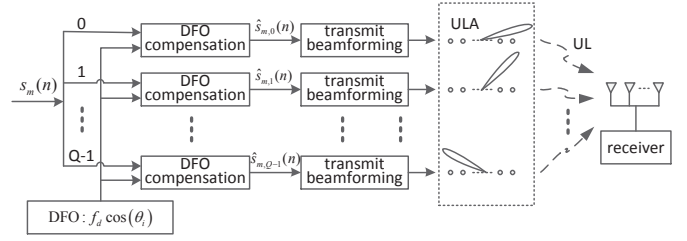}
\caption{Diagram of the transmitter design.}
\label{fig:ULTxDesign}
\vspace*{-0.0in}
\end{figure}

Based on the above observations, we propose to utilize a large-scale ULA at the transmitter to cope with the significant DFOs in high-speed transmission.
We generate multiple parallel beamforming branches simultaneously via the digital beamforming techniques. Here, each branch transmits signal towards the pre-determined direction through high-resolution beamforming. Since the transmitted signal in each branch is constrained within a narrow beam, it is mainly affected by one DFO when passing over the channel. Hence, it is possible to perform single DFO compensation at the transmitter for each branch.
Fig.~\ref{fig:ULTxDesign} shows the diagram of the transmitter design. The transmitted signal passes through $Q$ parallel branches simultaneously. In each branch, the DFO compensation and transmit beamforming are performed in sequence.


\vspace*{-0.0in}
\subsection{Beamforming Network and DFO Compensation}\label{ssec:bf}
As introduced in~\cite{Guo17High}, we place the large-scale ULA in the moving direction to make the distribution of the beam pattern in accordance with that of the DFOs. The antenna element spacing is selected as $d_t<\lambda/2$ to produce only one main beam in each transmit beamforming~\cite{Guo17High}. We further assume that $Q$ different values of $\theta$ are evenly selected between $0^\circ$ and $180^\circ$ to design the beamforming network, which are denoted by $\theta_i, i=0,1,\cdots,Q-1$. Here, the beamforming is performed in the whole angle range without estimating the AoDs of the multipaths. Hence, the beamforming network is designed in an offline manner to reduce the complexity. This is quite different from the existing competitors that depend on instantaneous angle information~\cite{Zhang11Multiple,Yang13Beamforming,Guo13Multiple}.

The goal of beamforming in each branch is to maintain the signal transmitted to the desired direction while suppressing the leakage to other directions. This can be easily implemented through the matched filter beamformer, whose weight vector for direction $\theta_i$ is determined by the steering vector
\begin{equation}\label{eq:w}
  \mathbf{w}(\theta_i) = \eta\mathbf{a}_t(\theta_i)e^{j\phi'(\theta_i)},
\end{equation}
where $\eta=\frac{1}{\left\|\sum_{i=0}^{Q-1}\mathbf{a}_t(\theta_i)e^{j\phi'(\theta_i)}\right\|}$ is the normalized parameter for restricting the transmitted signal power at multiple antennas, and $\phi'(\theta_i)$ is a random phase introduced at the direction $\theta_i$ to make the channel auto-correlation function independent of time and will be discussed in next section. 
Since the transmitted signal in each branch is affected by one dominant DFO after passing over the channel, we can easily compensate the DFO before transmission starts. Assume that the maximum DFO is known at the transmitter. For the $i$th branch, the main DFO induced by the relative motion is $f_d\cos{\theta_i}$ after transmit beamforming towards $\theta_i$. Then, the single DFO compensation can be easily performed as
\begin{equation}
  \hat{s}_{m,i}(n) = {s}_{m}(n) e^{-j2\pi f_{d}\cos\theta_i(m{N_s}+n){T_s}}.
\end{equation}

Regarding the beamforming towards $\theta_i$, the transmitted signal vector on the whole antenna array is given by
\begin{equation}
  \hat{\mathbf{s}}_{m,i}(n) = \mathbf{w}^*(\theta_i)\hat{s}_{m,i}(n).
\end{equation}

Similar to $\mathbf{S}_m(d_l)$ in~\eqref{eq:y_m}, define $\hat{\mathbf{S}}_{m,i}(d_l)=[\hat{\mathbf{s}}_{m,i}(0-d_l), \cdots, \hat{\mathbf{s}}_{m,i}(N_c-d_l-1)]$ as the transmitted signal matrix in the $m$th OFDM block of the $i$th branch after the delay of $d_l$. We further rewrite $\hat{\mathbf{S}}_{m,i}(d_l)$ in the following matrix form
\begin{equation}\label{eq:S_mi_2}
  \hat{\mathbf{S}}_{m,i}(d_l) = \mathbf{w}^*(\theta_i)\mathbf{s}_m(d_l)\mathbf{\Psi}_{m,i}(d_l),
\end{equation}
where $\mathbf{s}_{m}(d_l)=[s_{m}(0-d_l), \cdots, s_{m}(N_c-d_l-1)]$ is the original transmitted signal vector after the delay of $d_l$. Here, $\mathbf{\Psi}_{m,i}(d_l)$ stands for the DFO compensation matrix in the $i$th branch and can be expressed as
\begin{equation}
  \mathbf{\Psi}_{m,i}(d_l) = \mathrm{diag}\{\hat{\beta}_{m,0,i}(d_l),\cdots,\hat{\beta}_{m,N_c-1,i}(d_l)\},
\end{equation}
with $\hat{\beta}_{m,n,i}(d_l)=e^{-j2\pi f_d\cos\theta_i(m{N_s}+n-d_l){T_s}}$.

After replacing $\mathbf{S}_{m}(d_l)$ in~\eqref{eq:y_m} by $\hat{\mathbf{S}}_{m,i}(d_l)$, the received signal vector from the $i$th transmit branch can be written as
\begin{equation}\label{eq:r_mi}
  \mathbf{r}_{m,n_r,i} = \sum\limits_{l=1}^{L} \sum\limits_{q=1}^{N_p} \rho_{l,q,n_r} \mathbf{a}_t^\mathrm{T}(\theta_{l,q}) \hat{\mathbf{S}}_{m,i}(d_l) \mathbf{\Phi}_m(l,q).
\end{equation}

According to~\eqref{eq:S_mi_2}, we can further divide~\eqref{eq:r_mi} into
\begingroup\makeatletter\def\f@size{9.0}\check@mathfonts
\def\maketag@@@#1{\hbox{\m@th\normalsize\normalfont#1}}%
\begin{align}\label{eq:r_mi_1}
  & \mathbf{r}_{m,n_r,i} = \underbrace{\!\!\!\sum\limits_{l,q,\theta_{l,q}=\theta_i}\!\!\! \rho_{l,q,n_r}\eta{N_t}e^{-j\phi'(\theta_i)}\mathbf{s}_m(d_l)}_{desired\ signal} \nonumber\\ + & \underbrace{\sum\limits_{l',q',\theta_{l',q'}\neq\theta_i}\!\!\! \rho_{l',q',n_r}\mathbf{w}^\mathrm{H}(\theta_i) \mathbf{a}_t(\theta_{l',q'}) \mathbf{s}_m(d_{l'})\mathbf{\Psi}_{m,i}(d_{l'})\mathbf{\Phi}_m(l',q')}_{interference}.
\end{align}\endgroup

Here, the first term represents the desired signal from AoD $\theta_i$ which is equivalent to the standard received signal model through time-invariant frequency-selective channel. The second term is the interference from other directions and is affected by the residual DFOs. It can be observed that, with a sufficiently large $M$, $\mathbf{w}^\mathrm{H}(\theta_i)\mathbf{a}_t(\theta_{l',q'})\simeq 0$ holds for $\theta_i\neq\theta_{l',q'}$. Therefore, the second term has been greatly suppressed through the high-resolution transmit beamforming.

Finally, the received signal at the $n_r$th receive antenna is the sum of signals from $Q$ parallel branches, that is
\begin{equation}\label{eq:r_m_ant}
  \tilde{\mathbf{r}}_{m,n_r} = \sum\limits_{i=0}^{Q-1}\mathbf{r}_{m,n_r,i}+\mathbf{z}_{m,n_r}.
\end{equation}

Following the above discussions, the dominant DFO caused by high mobility has been compensated individually in each beamforming branch. Now the uplink channel can be considered as time-invariant approximately when the interference is mitigated to a tolerable magnitude. Especially, it is expected that the interference would turn to zero when $M$ approaches infinity. In this situation, the channel would become exactly time-invariant. Then, the conventional static channel estimation~\cite{Zhang11Multiple,Yang13Beamforming} and maximum ratio combing (MRC) detection can be carried out at BS.

\section{Doppler Spread Analysis}
%

As introduced above, with infinite transmit antennas, the time variation of channel can be completely eliminated. However, the number of transmit antennas is limited in practice, which means that the residual DFOs still appear at the receiver and would introduce a certain time variation.
In this section, we theoretically analyze the effect of residual DFOs in the proposed angle-domain DFO compensation scheme. We adopt the Doppler spread~\cite{Souden09Robust} to evaluate the time variation of the resultant uplink channel at BS.

\vspace*{-0.1in}
\subsection{PSD Analysis for the Equivalent Uplink Channel}\label{ssec:PSD}
We start with analyzing the power spectrum density (PSD) for the equivalent uplink channel which will show the contribution of each frequency component on channel variation~\cite{Jakes94Microwave,Souden09Robust}.
Since the channel taps are independent and have the same statistical property, we analyze the residual DFOs of only one tap for simplicity, i.e., $L=1$ and $d_1=0$. After omitting the receive antenna index $n_r$ and the noise term, the received signal in~\eqref{eq:r_m_ant} turns to
\begin{equation}\label{eq:r_m}
  \tilde{\mathbf{r}}_{m} =\!\! \sum\limits_{i=0}^{Q-1}\sum\limits_{q=1}^{N_p} \rho_{1,q} \mathbf{w}^\mathrm{H}(\theta_i) \mathbf{a}_t(\theta_{1,q}) \mathbf{s}_m(d_1)\mathbf{\Psi}_{m,i}(d_1) \mathbf{\Phi}_m(1,q).
\end{equation}

Assume that each tap has a large number of multipaths with AoDs between $0$ and $\pi$ and transmit beamforming is performed continuously between $0$ and $\pi$. Then, the summations in~\eqref{eq:r_m} can be replaced by integrations. After replacing $\theta_{1,q}$ and $\theta_{i}$ by $\tilde{\theta}$ and $\theta$, respectively, the equivalent uplink channel can be expressed in the continuous-time form as
\begin{equation}\label{eq:g_chn_bf}
\tilde{g}(t) \!=\! E_0 \int_0^{\pi} \int_0^{\pi}\!\! \alpha(\tilde{\theta})G(\theta,\tilde{\theta}) e^{j(\omega_d t y(\theta,\tilde{\theta})+\varphi(\theta,\tilde{\theta}) )} d\tilde{\theta} d\theta,
\end{equation}
where $t$ is the continuous-time index, $E_0$ is a scaling constant, $\alpha(\tilde{\theta})$ is the random gain for the path associated with AoD $\tilde{\theta}$, $\omega_d=2\pi f_d$, and $y(\theta,\tilde{\theta})=\cos{\tilde{\theta}}-\cos{\theta}$.
Here, $G(\theta,\tilde{\theta})$ stands for the antenna gain at direction $\tilde{\theta}$ when applying the matched filter beamformer towards direction $\theta$, and is determined as
\begin{equation}\label{eq:Gain}
  G(\theta,\tilde{\theta}) = \frac{1}{M}\mathbf{a}_t^\mathrm{H}(\theta)\mathbf{a}_t(\tilde{\theta})=\frac{\sin\left(\frac{\pi Md}{\lambda}y(\theta,\tilde{\theta})\right)} {M\sin\left(\frac{\pi d}{\lambda}y(\theta,\tilde{\theta})\right)}.
\end{equation}

In~\eqref{eq:g_chn_bf}, the random phase $\varphi(\theta,\tilde{\theta})$ is expressed as $\varphi(\theta,\tilde{\theta})=\phi(\tilde{\theta})+\phi'(\theta)$, where $\phi'(\theta)$ is the random phase artificially introduced for beamforming direction $\theta$ at the transmitter while $\phi(\tilde{\theta})$ is the random phase for the path associated with AoD $\tilde{\theta}$.
We introduce the random phase $\phi'(\theta)$ in the beamforming network such that the equivalent uplink channel satisfies stationary distribution. 

\begin{lemma}\label{Lemma:ChnCorr}
The autocorrelation for the equivalent fading channel in~\eqref{eq:g_chn_bf} is independent of the time index $t$ and can be expressed as a function of delay $\tau$, that is
\begin{equation}\label{eq:R}
R_{\tilde{g}\tilde{g}}(\tau) \!=\! E_0^2 \int_0^{\pi} \int_0^{\pi}\!\! \mathrm{E}\{\alpha^2(\tilde{\theta})\} |G(\theta,\tilde{\theta})|^2 e^{-j\omega_d \tau y(\theta,\tilde{\theta})} d\tilde{\theta} d\theta.
\end{equation}
\end{lemma}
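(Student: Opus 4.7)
The plan is to compute $R_{\tilde{g}\tilde{g}}(t_1,t_2)=\mathrm{E}\{\tilde{g}(t_1)\tilde{g}^\ast(t_2)\}$ directly from the integral representation \eqref{eq:g_chn_bf}, show that the expectation forces a ``diagonal'' collapse onto $\theta_1=\theta_2$, $\tilde\theta_1=\tilde\theta_2$ via the independent uniform phases, and then observe that the surviving $t$-dependence is only through $t_2-t_1=\tau$.

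First I would write
\begin{align*}
\tilde{g}(t_1)\tilde{g}^\ast(t_2) = E_0^2\!\!\int_0^\pi\!\!\!\int_0^\pi\!\!\!\int_0^\pi\!\!\!\int_0^\pi \alpha(\tilde\theta_1)\alpha^\ast(\tilde\theta_2) G(\theta_1,\tilde\theta_1)G^\ast(\theta_2,\tilde\theta_2)\\
\times e^{j\omega_d[t_1 y(\theta_1,\tilde\theta_1)-t_2 y(\theta_2,\tilde\theta_2)]} e^{j[\varphi(\theta_1,\tilde\theta_1)-\varphi(\theta_2,\tilde\theta_2)]}\,d\tilde\theta_1 d\theta_1 d\tilde\theta_2 d\theta_2,
\end{align*}
and take expectation under the integral. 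The key inputs are the modeling assumptions already in the paper: the path phases $\phi(\tilde\theta)$ from \eqref{eq:h_tap}/\cite{Zheng03Simulation} and the artificial beam phases $\phi'(\theta)$ from \eqref{eq:w} are independent and uniform on $[0,2\pi)$, and the random gains $\alpha(\tilde\theta)$ are independent across distinct AoDs with $\mathrm{E}\{\alpha(\tilde\theta)\}=0$. Using $\varphi(\theta,\tilde\theta)=\phi(\tilde\theta)+\phi'(\theta)$ and the independence of $\phi'$ from $\phi$ and $\alpha$, the joint expectation factorizes and enforces $\theta_1=\theta_2$ and $\tilde\theta_1=\tilde\theta_2$; this is the continuous analogue of $\mathrm{E}\{e^{j(\phi'_{i_1}-\phi'_{i_2})}\}=\delta_{i_1,i_2}$ and likewise for $\phi$, which I would justify by pointing back to the underlying discrete sum-of-sinusoids Jakes model in \eqref{eq:h_tap} (with $N_p$ paths and $Q$ beams) and then passing to the continuum, exactly as was done to obtain \eqref{eq:g_chn_bf} from \eqref{eq:r_m}.

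After the collapse, two of the four integrals disappear and I am left with
\begin{equation*}
R_{\tilde{g}\tilde{g}}(t_1,t_2)=E_0^2\!\int_0^\pi\!\!\int_0^\pi \mathrm{E}\{|\alpha(\tilde\theta)|^2\}\,|G(\theta,\tilde\theta)|^2\,e^{j\omega_d(t_1-t_2)y(\theta,\tilde\theta)}\,d\tilde\theta\,d\theta.
\end{equation*}
Setting $\tau=t_2-t_1$ (and noting $\mathrm{E}\{|\alpha|^2\}=\mathrm{E}\{\alpha^2\}$ in the real-envelope convention used in the statement) yields \eqref{eq:R}. The independence of $t_1$ follows immediately because $t_1$ and $t_2$ appear only through their difference in the exponent, which is the whole point of inserting $\phi'(\theta)$ in \eqref{eq:w}: without it, cross terms with $\theta_1\neq\theta_2$ would survive and carry a residual $e^{j\omega_d(t_1\cos\theta_2-t_2\cos\theta_1)}$-type dependence that breaks stationarity.

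The main obstacle I anticipate is not the algebra but a clean justification of the ``continuous Kronecker delta'' step — i.e., explaining rigorously why decorrelation of i.i.d.\ uniform phases indexed by a continuum of angles forces $\theta_1=\theta_2$, $\tilde\theta_1=\tilde\theta_2$ inside the double integral. I would handle this the way \cite{Jakes94Microwave,Zheng03Simulation} do: keep the discrete formulation from \eqref{eq:r_m} (finite $Q$ and $N_p$, i.i.d.\ phases so that the four-fold sum collapses to a double sum over matching indices), compute the autocorrelation there, and only at the end take $Q,N_p\to\infty$ with appropriate normalization absorbed into $E_0$ to recover the double integral \eqref{eq:R}. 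Everything else is mechanical.
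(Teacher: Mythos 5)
Your proposal is correct and follows essentially the same route as the paper's Appendix~A: substitute \eqref{eq:g_chn_bf} into the autocorrelation definition, use the independence and uniformity of $\phi(\tilde\theta)$ and $\phi'(\theta)$ to collapse the four-fold integral onto the diagonal $\theta=\theta'$, $\tilde\theta=\tilde\theta'$, and observe that only $\tau$ survives in the exponent. The one point where you go beyond the paper is in flagging (and proposing a discrete-sum-then-limit fix for) the informal ``continuous Kronecker delta'' step, which the paper simply asserts.
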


\begin{proof}
See Appendix~\ref{Proof:ChnCorr}.
\end{proof}

Here, we assume $\int_0^{\pi}\mathrm{E}\{\alpha^2(\tilde{\theta})\}d\tilde{\theta}=1$ and $E_0=\sqrt{2}$ as in \cite{Zheng03Simulation}. Note that there exists a constant coefficient between~\eqref{eq:R} and the accurate channel autocorrelation which will not affect the subsequent analysis.
Since it is hard to get the close-form expression for the integrals in \eqref{eq:R}, we develop an effective approximation for \eqref{eq:R}. With a given beamforming angle $\theta$, the beam pattern of the ULA determined by \eqref{eq:Gain} is comprised of two parts: mainlobe and sidelobes, as shown in Fig.~\ref{fig:Beampattern}. Therefore, it is natural to divide the autocorrelation in \eqref{eq:R} into the following two parts according to the range of the mainlobe and sidelobes
\begin{equation}\label{eq:R_1}
R_{\tilde{g}\tilde{g}}(\tau) = R_{\tilde{g}\tilde{g}}^{(m)}(\tau)+\sum_{\substack{I_\mathrm{min}\leq i\leq I_\mathrm{max} \\ i\neq\{-1,0\}}}R_{\tilde{g}\tilde{g},i}^{(s)}(\tau),
\end{equation}
where
\begin{equation}\label{eq:R_m}
R_{\tilde{g}\tilde{g}}^{(m)}(\tau) \!=\! \frac{2}{\pi} \int_0^{\pi} \int_{\theta-\Delta_{l}}^{\theta+\Delta_{u}}  |G(\theta,\tilde{\theta})|^2 e^{-j\omega_d \tau y(\theta,\tilde{\theta})} d\tilde{\theta} d\theta
\end{equation}
and
\begin{equation}\label{eq:R_s}
R_{\tilde{g}\tilde{g},i}^{(s)}(\tau) \!=\! \frac{2}{\pi} \int_0^{\pi} \int_{\theta_i-\Delta_{l,i}}^{\theta_i+\Delta_{u,i}} \! |G(\theta,\tilde{\theta})|^2 e^{-j\omega_d \tau y(\theta,\tilde{\theta})} d\tilde{\theta} d\theta
\end{equation}
are related to the mainlobe and the $i$th sidelobe of the beam pattern, respectively. In \eqref{eq:R_m}, $\Delta_{l}$ and $\Delta_{u}$ are the angle offsets of the lower and upper null points, respectively, adjacent to the beamforming direction $\theta$. As shown in Fig.~\ref{fig:Beampattern}, the width of the mainlobe is jointly determined by $\Delta_{l}$ and $\Delta_{u}$. According to the antenna gain in \eqref{eq:Gain}, the null points of the mainlobe satisfy: $\frac{\pi Md}{\lambda}y(\theta,\tilde{\theta})=\frac{\pi Md}{\lambda}(\cos\tilde{\theta}-\cos\theta)=\pm\pi$. Thus, $\Delta_{l}$ and $\Delta_{u}$ can be expressed as
\begin{equation}
  \left\{ \begin{array}{l} \Delta_{l} = \theta-\arccos(\cos\theta+\frac{\lambda}{Md}),\\ \Delta_{u} = \arccos(\cos\theta-\frac{\lambda}{Md})-\theta. \end{array} \right.
\end{equation}

In \eqref{eq:R_s}, $\theta_i$ is the direction of the $i$th sidelobe when beamforming towards direction $\theta$, and $\Delta_{l,i}$ and $\Delta_{u,i}$ are the angle offsets of the lower and upper null points in the $i$th sidelobe, respectively, as illustrated in Fig.~\ref{fig:Beampattern}. From \eqref{eq:Gain}, we know $\theta_i$ satisfies
\begin{equation}
  \frac{\pi Md}{\lambda}(\cos\theta_i-\cos\theta)=\frac{(2i+1)\pi}{2},\ i\neq0,-1,
\end{equation}
that is,
\begin{equation}\label{eq:theta_i}
\cos\theta_i = \cos\theta+u_i,\ i\neq0,-1,
\end{equation}
with $u_i=\frac{(2i+1)\lambda}{2Md}$. Moreover, since both $\theta$ and $\theta_i$ lie between 0 and $\pi$, the range for the integer $i$ is determined as:  $\{i|i\in[I_\mathrm{min},I_\mathrm{max}],\ i\neq 0,-1\}$ with $I_\mathrm{max}=\lfloor\frac{2Md}{\lambda}-\frac{1}{2}\rfloor$ and $I_\mathrm{min}=\lceil-\frac{2Md}{\lambda}-\frac{1}{2}\rceil$.

Note that only the mainlobe is considered when calculating the DFOs in~\cite{You17BDMA}. In our work, we consider the impact from both mainlobe and sidelobes. In fact, the residual DFOs induced by the sidelobes cannot be neglected which will be verified in the following analysis.

\begin{figure}[t!]
\centering
\includegraphics[scale=0.6]{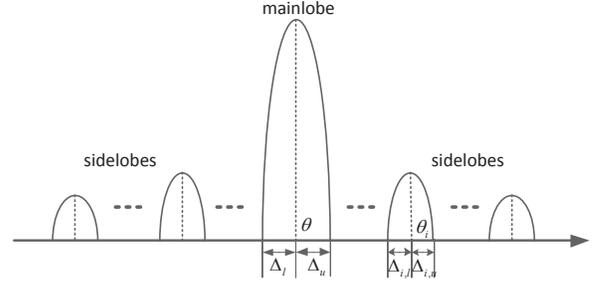}
\caption{An example of the beam pattern for the ULA.}
\label{fig:Beampattern}
\vspace*{-0.0in}
\end{figure}


\begin{lemma}\label{Lemma:R_m}
The autocorrelation of the fading channel related to the mainlobe in \eqref{eq:R_m} can be approximated as
\begin{align}\label{eq:R_m_3}
  R_{\tilde{g}\tilde{g}}^{(m)}(\tau) =& C_0 + 2C_1\frac{\sin(W_0\tau)}{\pi\tau} \nonumber\\ + & C_1\frac{\sin(W_0(\tau+t_0))}{\pi(\tau+t_0)} + C_1\frac{\sin(W_0(\tau-t_0))}{\pi(\tau-t_0)},
\end{align}
where $C_0=\frac{8\Delta_m\theta_t}{\pi}$, $C_1 = -\frac{2}{\omega_d}\ln\tan\frac{\theta_t}{2}$, $W_0=\frac{\omega_d\lambda}{Md}$, $t_0=\frac{\pi Md}{\omega_d\lambda}$, and $W_0t_0=\pi$. Here, $\Delta_m$ and $\theta_t$ are defined in \eqref{eq:Delta_m} and \eqref{eq:theta_t} in Appendix~\ref{Proof:R_m}, respectively.
\end{lemma}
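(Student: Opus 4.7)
The plan is to reduce the double integral (18) to a closed form by a change of variables in the inner integral followed by a separable approximation of the mainlobe beam pattern. In the inner integral over $\tilde\theta$ I would substitute $y=\cos\tilde\theta-\cos\theta$, so that $d\tilde\theta=-dy/\sin\tilde\theta$ and, by the null-point condition in (17), the range $[\theta-\Delta_l,\theta+\Delta_u]$ maps cleanly to $y\in[-\lambda/(Md),\lambda/(Md)]$ whenever $\theta$ is bounded away from $0$ and $\pi$. Since $|y|$ is of order $1/M$ on the mainlobe, $\tilde\theta$ stays near $\theta$, so I would approximate $\sin\tilde\theta\approx\sin\theta$ and pull the factor $1/\sin\theta$ outside the inner integral.

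Next, I would replace the exact mainlobe pattern $|G|^2=\sin^2(\pi Mdy/\lambda)/[M\sin(\pi dy/\lambda)]^2$ by a smooth envelope matching its peak and null values, namely $|G(y)|^2\approx\cos^2(\pi Mdy/(2\lambda))$, which equals $1$ at $y=0$ and $0$ at $y=\pm\lambda/(Md)$. Expanding $\cos^2 x=(1+\cos 2x)/2$ and using $\int_{-a}^{a}e^{-j\omega y}dy=2\sin(a\omega)/\omega$, the $y$-integral evaluates to
\begin{equation*}
\frac{\pi}{2\omega_d}\left[2\frac{\sin(W_0\tau)}{\pi\tau}+\frac{\sin(W_0(\tau-t_0))}{\pi(\tau-t_0)}+\frac{\sin(W_0(\tau+t_0))}{\pi(\tau+t_0)}\right],
\end{equation*}
which already displays the three sinc-like terms of (20) with the stated $W_0$ and $t_0$. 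The identity $W_0t_0=\pi$ is exactly the condition that makes $\cos(\omega t_0)$ the natural shift produced by expanding $\cos^2$.

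For the outer $\theta$-integration I would split $[0,\pi]$ into an interior part $[\theta_t,\pi-\theta_t]$ and two boundary strips of width $\theta_t$, so as to tame the singularity of $1/\sin\theta$ at the endpoints. On the interior, $\int_{\theta_t}^{\pi-\theta_t}d\theta/\sin\theta=-2\ln\tan(\theta_t/2)$, which combined with the $2/\pi$ prefactor of (18) and the $\pi/(2\omega_d)$ from the $y$-integral gives exactly the coefficient $C_1=-(2/\omega_d)\ln\tan(\theta_t/2)$ in front of the three sinc terms. On each boundary strip the mainlobe in $\tilde\theta$ is truncated by the physical range $[0,\pi]$, and $e^{-j\omega_d\tau y}\approx 1$ across the surviving portion (of width at most $\lambda/(Md)$); bounding the truncated inner integral by the constant $\Delta_m$ from (21) and multiplying by the strip width $\theta_t$ yields the $\tau$-independent contribution $C_0=8\Delta_m\theta_t/\pi$.

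The hard part will be this last step — cleanly isolating the two boundary strips as a single $\tau$-independent constant. The approximations that worked on the interior (pulling $1/\sin\theta$ out of the inner integral and replacing $|G|^2$ by the cosine envelope) both degrade as $\theta\to 0,\pi$, so the threshold $\theta_t$ from (22) must be chosen small enough that the interior logarithm $-2\ln\tan(\theta_t/2)$ still dominates, yet large enough that the truncated mainlobe on each strip averages to an effective constant $\Delta_m$. I expect the appendix to encode this balance through the precise definitions of $\Delta_m$ and $\theta_t$ in (21)–(22), after which the interior and boundary contributions assemble into (20).
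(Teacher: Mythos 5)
Your proposal follows essentially the same route as the paper's Appendix~B: the same $\cos^2$ envelope for the mainlobe gain, the same linearization (your $\sin\tilde{\theta}\approx\sin\theta$ under $y=\cos\tilde{\theta}-\cos\theta$ is the paper's $y\simeq-\varepsilon\sin\theta$), the same closed-form inner integral giving the three sinc terms with $W_0 t_0=\pi$, and the same split of the outer integral at $\theta_t$, with $\int_{\theta_t}^{\pi-\theta_t}d\theta/\sin\theta=-2\ln\tan(\theta_t/2)$ yielding exactly $C_1$. The only wobble is the strip bookkeeping you flagged: the paper caps the mainlobe half-width at $\Delta_m$ so the truncated inner integral is $\approx 2\Delta_m$ (both $\cos^2$ and the exponential set to $1$), and $\tfrac{2}{\pi}\cdot 2\Delta_m\cdot\theta_t$ per strip summed over the two strips gives $C_0=8\Delta_m\theta_t/\pi$, whereas bounding the inner integral by $\Delta_m$ as written produces only half of $C_0$.
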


\begin{proof}
See Appendix~\ref{Proof:R_m}.
\end{proof}

We see that the first term in \eqref{eq:R_m_3} is irrelevant to delay $\tau$ and is considered as the direct component. The second term is the sinc function of $\tau$. The third and fourth terms are the sinc functions after the delay of $-t_0$ and $t_0$, respectively.

\begin{lemma}\label{Lemma:R_s}
The autocorrelation of the fading channel related to the $i$th sidelobe in \eqref{eq:R_s} can be approximated as
\begin{align}\label{eq:R_s_0}
  R_{\tilde{g}\tilde{g},i}^{(s)}(\tau) =& D_i \bar{C}_i e^{-jW_i\tau}\left\lbrace\frac{2\sin(\frac{W_0}{2}\tau)}{\pi\tau} \right. \nonumber\\ +& \left.\frac{\sin(\frac{W_0}{2}(\tau+2t_0))}{\pi(\tau+2t_0)} + \frac{\sin(\frac{W_0}{2}(\tau-2t_0))}{\pi(\tau-2t_0)} \right\rbrace,
\end{align}
where $D_i=\left|\frac{1}{M\sin[\frac{(2i+1)\pi}{2M}]}\right|^2$, $W_i=u_i\omega_d$, and $\bar{C}_i$ is determined by the following functions
\begin{equation}\label{eq:Ci}
	\bar{C}_i = \left\{\! \begin{array}{ll} \frac{1}{\omega_d} \int_{\bar{\theta}_t}^{b_{i}}  \frac{1}{\sqrt{1-(\cos\theta_i-u_i)^2}}d\theta_i, & i>0,\\ \frac{1}{\omega_d} \int_{a_i}^{\pi-\bar{\theta}_t} \frac{1}{\sqrt{1-(\cos\theta_i-u_i)^2}}d\theta_i, & i<-1, \end{array} \right.
\end{equation}
with $\bar{\theta}_t$, $a_i$ and $b_i$ being defined in \eqref{eq:bartheta_t}, \eqref{eq:a_i} and \eqref{eq:b_i} in Appendix~\ref{Proof:R_s}, respectively.
\end{lemma}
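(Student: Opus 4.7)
The plan is to run the same narrow-band Fourier argument used for Lemma~\ref{Lemma:R_m}, but re-centered on the $i$th sidelobe peak. The inner integration window $\tilde{\theta}\in[\theta_i-\Delta_{l,i},\theta_i+\Delta_{u,i}]$ is a narrow arc on which $y(\theta,\tilde{\theta})=\cos\tilde{\theta}-\cos\theta$ stays within $\lambda/(2Md)$ of $u_i$, so I first split the phase as $\omega_d\tau y=W_i\tau+\omega_d\tau(y-u_i)$ to factor the oscillation at the sidelobe center $W_i=u_i\omega_d$ out of the integral as $e^{-jW_i\tau}$.

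Next I approximate the beam pattern inside that window. Since $\pi dy/\lambda\simeq(2i+1)\pi/(2M)$ throughout, the slowly varying denominator in~\eqref{eq:Gain} is replaced by the constant $M\sin[(2i+1)\pi/(2M)]$, producing the factor $D_i$. The numerator is reduced by the shift identity $\sin[(2i+1)\pi/2+x]=\pm\cos x$, yielding $|G(\theta,\tilde{\theta})|^2\simeq D_i\cos^2[\pi Md(y-u_i)/\lambda]$. I then substitute $\tilde{y}=y-u_i$ in the inner integral; the Jacobian $1/\sin\tilde{\theta}$ is treated as the constant $1/\sin\theta_i$ on the narrow window. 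Expanding $\cos^2 x=(1+\cos 2x)/2$ produces three elementary Fourier integrals over $[-\lambda/(2Md),\lambda/(2Md)]$ with frequencies $\omega_d\tau$ and $\omega_d\tau\pm 2\pi Md/\lambda=\omega_d(\tau\mp 2t_0)$. Each evaluates to a sinc of half-bandwidth $W_0/2$, and, with the zero-frequency term carrying twice the weight of the other two, they assemble into precisely the bracketed expression of~\eqref{eq:R_s_0}.

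The outer integration over $\theta$ is then converted to one over $\theta_i$ through $\cos\theta_i=\cos\theta+u_i$: combining $\sin\theta\,d\theta=\sin\theta_i\,d\theta_i$ with the previously extracted $1/\sin\theta_i$ gives the Jacobian $1/\sin\theta=1/\sqrt{1-(\cos\theta_i-u_i)^2}$, and the $2/\pi$ prefactor together with the $1/\omega_d$ left over from the inner step assemble into the constant $\bar{C}_i$ defined in~\eqref{eq:Ci}. The case split in~\eqref{eq:Ci} reflects the constraint $\theta_i\in[0,\pi]$: one end of the new interval is dictated by the boundary $\theta\in[0,\pi]$, producing $b_i$ (for $i>0$) or $a_i$ (for $i<-1$), while the other is replaced by a cutoff $\bar{\theta}_t$ (or $\pi-\bar{\theta}_t$) to stay away from the points where $\sin\theta_i$ vanishes. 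Justifying these boundary cutoffs, i.e., showing that both the ``constant denominator'' and ``constant Jacobian'' replacements cost only an $O(1/M)$ correction away from them, is the main obstacle; the rest is a direct adaptation of the mainlobe calculation.
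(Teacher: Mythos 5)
Your proposal is correct and follows essentially the same route as the paper's Appendix~\ref{Proof:R_s}: factoring out $e^{-jW_i\tau}$ via $y(\theta,\tilde{\theta})=y(\theta_i,\tilde{\theta})+u_i$, replacing the gain by $D_i\cos^2(\frac{\pi Md}{\lambda}y(\theta_i,\tilde{\theta}))$ as in \eqref{eq:G_s}, evaluating the narrow inner integral into the three sinc terms of half-bandwidth $W_0/2$ with delays $\pm 2t_0$, and changing the outer variable to $\theta_i$ with Jacobian $1/\sqrt{1-(\cos\theta_i-u_i)^2}$ and the cutoffs $\bar{\theta}_t$, $a_i$, $b_i$. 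The only detail you defer (and the paper supplies explicitly) is the short argument that $b_i<\pi-\bar{\theta}_t$ for $i>0$ when $M$ is large, which legitimizes using $\Delta\theta_i\sin\theta_i=\lambda/(2Md)$ over the whole retained interval.
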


\begin{proof}
See Appendix~\ref{Proof:R_s}.
\end{proof}

Finally, combining the discussions in both Lemma \ref{Lemma:R_m} and Lemma \ref{Lemma:R_s}, we obtain the close-form expression for the autocorrelation of the equivalent uplink channel in \eqref{eq:R_1}.

The PSD is defined as the Fourier transformation of channel autocorrelation~\cite{Jakes94Microwave,Souden09Robust}, that is
\begin{equation}\label{eq:PSD_def}
P(\omega) = \int_{-\infty}^{+\infty} R_{\tilde{g}\tilde{g}}(\tau)e^{-j\omega\tau} d\tau.
\end{equation}

\begin{lemma}\label{Lemma:PSD}
The PSD for the equivalent uplink channel can be written in the following form
\begin{align}\label{eq:PSD}
P(\omega) &= C_0\delta(\omega)+2C_1(1+\cos\omega t_0)X(j\omega) \nonumber\\
+& \!\!\! \sum_{\substack{I_\mathrm{min}\leq i\leq I_\mathrm{max} \\ i\neq\{-1,0\}}} \!\!\!\!\! 2{D_i}\bar{C}_i(1+\cos\left(2(\omega+W_i)t_0\right))\bar{X}(j(\omega+W_i)),
\end{align}
where $\delta(\omega)$ is the impulse response function, and $X(j\omega)$ and $\bar{X}(j\omega)$ are the rectangular window functions with the width of $W_0$ and $W_0/2$, respectively, that is
\begin{equation*}
X(j\omega) = \left\{\! \begin{array}{ll} 1, & |\omega|<W_0, \\ 0, & |\omega|>W_0, \end{array} \right.
\end{equation*}
\begin{equation*}
\bar{X}(j\omega) = \left\{\! \begin{array}{ll} 1, & |\omega|<W_0/2, \\ 0, & |\omega|>W_0/2. \end{array} \right.
\end{equation*}

\end{lemma}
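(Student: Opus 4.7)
Since $R_{\tilde{g}\tilde{g}}(\tau)$ in \eqref{eq:R_1} has already been decomposed by Lemmas~\ref{Lemma:R_m} and~\ref{Lemma:R_s} into a mainlobe term and a finite sum of sidelobe terms, and the Fourier transform in \eqref{eq:PSD_def} is linear, the plan is to apply $\mathcal{F}\{\cdot\}$ termwise and then collapse exponentials using Euler's formula. The only transform pairs needed are: (i) $\mathcal{F}\{1\}\propto\delta(\omega)$; (ii) the canonical sinc/rectangle pair $\sin(W\tau)/(\pi\tau)\leftrightarrow \mathbf{1}_{|\omega|<W}$; (iii) the time-shift rule $\mathcal{F}\{f(\tau\pm t_0)\}=e^{\pm j\omega t_0}F(\omega)$; and (iv) the modulation rule $\mathcal{F}\{e^{-jW_i\tau}f(\tau)\}=F(\omega+W_i)$.

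\textbf{Mainlobe contribution.} First I would transform the four summands in \eqref{eq:R_m_3}. The constant $C_0$ produces a Dirac mass at $\omega=0$; the term $2C_1\sin(W_0\tau)/(\pi\tau)$ gives $2C_1 X(j\omega)$ via pair~(ii) with $W=W_0$; the two shifted sinc terms $C_1\sin(W_0(\tau\pm t_0))/(\pi(\tau\pm t_0))$ give $C_1 e^{\mp j\omega t_0}X(j\omega)$ by (ii)+(iii). Adding them and using $e^{j\omega t_0}+e^{-j\omega t_0}=2\cos(\omega t_0)$ packages the mainlobe PSD into $C_0\delta(\omega)+2C_1(1+\cos\omega t_0)X(j\omega)$, matching the first two contributions in \eqref{eq:PSD}.

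\textbf{Sidelobe contribution.} For each admissible $i$, the bracket in \eqref{eq:R_s_0} is again one unshifted sinc plus two $\pm 2t_0$-shifted sincs, but now with parameter $W_0/2$. By pair~(ii) with $W=W_0/2$ and the shift rule~(iii), this bracket Fourier-transforms to $2\bar{X}(j\omega)+(e^{j2\omega t_0}+e^{-j2\omega t_0})\bar{X}(j\omega)=2(1+\cos 2\omega t_0)\bar{X}(j\omega)$. The remaining scalar prefactor $D_i\bar{C}_i$ passes through linearly, and the carrier $e^{-jW_i\tau}$ triggers the modulation rule~(iv), producing the frequency-shifted term $2D_i\bar{C}_i(1+\cos 2(\omega+W_i)t_0)\bar{X}(j(\omega+W_i))$. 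Summing over $i\in[I_\mathrm{min},I_\mathrm{max}]\setminus\{-1,0\}$ delivers the last contribution of \eqref{eq:PSD}.

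\textbf{Where to be careful.} There is no deep obstacle; the proof is elementary Fourier bookkeeping once the closed-form autocorrelation from Lemmas~\ref{Lemma:R_m}--\ref{Lemma:R_s} is in hand. The only places where I would slow down are (i) tracking the shift parameter --- $\pm t_0$ for the mainlobe (yielding a cosine of $\omega t_0$) versus $\pm 2t_0$ for each sidelobe (yielding a cosine of $2\omega t_0$); (ii) applying~(iv) with the correct sign so that $\bar{X}$ is evaluated at $j(\omega+W_i)$ rather than $j(\omega-W_i)$; and (iii) the usual $2\pi$ normalization on $\mathcal{F}\{1\}\propto\delta(\omega)$, which is a convention matter and can be absorbed into the constant $C_0$ as written in \eqref{eq:PSD}.
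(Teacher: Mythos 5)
Your proposal is correct and follows essentially the same route as the paper: the paper likewise splits $P(\omega)$ into the mainlobe and sidelobe contributions of \eqref{eq:R_1} and invokes the standard Fourier pairs (constant $\to$ impulse, sinc $\to$ rectangle, time shift $\to$ phase factor, complex-exponential modulation $\to$ frequency shift) to obtain \eqref{eq:PSD_m} and \eqref{eq:PSD_si}, which you simply spell out in more detail, including the correct $\pm t_0$ versus $\pm 2t_0$ shifts and the sign in $\bar{X}(j(\omega+W_i))$.
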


\begin{proof}
See Appendix~\ref{Proof:PSD}.
\end{proof}

In \eqref{eq:PSD}, the first two terms are the PSD related to the antenna mainlobe, while the third term is the PSD related to all the sidelobes. According to Lemma~\ref{Lemma:PSD}, there exist non-direct components in the equivalent uplink channel, meaning that the channel still exhibits some time-variation. 

\vspace*{-0.1in}
\subsection{Scaling Law of the Doppler Spread}\label{ssec:DS}
To measure the time variation of the equivalent uplink channel, we further utilize the general metric Doppler spread~\cite{Souden09Robust,Bellili17A}, which is defined as
\begin{equation}\label{eq:DS_def}
  \sigma_{DS} = \left( \frac{\int_{-2\omega_d}^{2\omega_d}\omega^2P(\omega)d\omega}{\int_{-2\omega_d}^{2\omega_d}P(\omega)d\omega} \right)^{\frac{1}{2}}.
\end{equation}

The Doppler spread of \eqref{eq:DS_def} stands for the standard deviation of the DFO. The deduction of the Doppler spread requires not only the range of the DFOs but also the shape of the channel's PSD. The characterization of the time-varying channel is directly related to this information~\cite{Souden09Robust,Bellili17A}.
Note that the analyzed Doppler spread in \cite{You17BDMA} is defined as the difference between the maximum and minimum DFOs. In contrast, the definition in \eqref{eq:DS_def} takes into account the channel PSD that reflects different contribution of each frequency component on channel variation. Therefore, \eqref{eq:DS_def} would provide a more comprehensive definition of Doppler spread for time-varying channels.

\begin{lemma}\label{Lemma:DS}
The Doppler spread of the equivalent uplink channel is expressed as
\begin{equation}\label{eq:DS}
  \sigma_{DS} = \left( \Gamma/\Lambda \right)^{\frac{1}{2}},
\end{equation}
where
\begin{equation}\label{eq:Lambda}
  \Lambda = C_0 + 4{C_1}{W_0} + \sum_{\substack{I_\mathrm{min}\leq i\leq I_\mathrm{max} \\ i\neq\{-1,0\}}} 2{D_i}{\bar{C}_i}{W_0},
\end{equation}
\begin{align}\label{eq:Gamma}
  \Gamma =& \frac{4{C_1}{W_0^3}}{3}-\frac{8{C_1}{W_0}}{t_0^2} \nonumber\\ +& \!\!\sum_{\substack{I_\mathrm{min}\leq i\leq I_\mathrm{max} \\ i\neq\{-1,0\}}}\!\! \left( \frac{4{D_i}{\bar{C}_i}{W_0^3}}{6}-\frac{{D_i}{\bar{C}_i}{W_0}}{{t}_0^2}+2{D_i}{\bar{C}_i}{W_0}{W_i^2}\right).
\end{align}

\end{lemma}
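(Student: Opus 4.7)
The plan is to substitute the PSD from Lemma~\ref{Lemma:PSD} directly into the defining ratio \eqref{eq:DS_def} and evaluate numerator and denominator term by term. Since the support of every component of $P(\omega)$ lies comfortably inside $[-2\omega_d,2\omega_d]$, each integration range can be replaced by the natural support of the corresponding rectangular window, so the improper integrals reduce to elementary polynomial and trigonometric integrals over finite intervals. The identity $W_0 t_0 = \pi$, together with $\sin(W_0 t_0) = 0$ and $\cos(W_0 t_0)= -1$, will be used repeatedly to collapse boundary terms.

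For the denominator $\Lambda$, I would first pick off $\int C_0\delta(\omega)\,d\omega = C_0$. The mainlobe contribution is $2C_1\int_{-W_0}^{W_0}(1+\cos\omega t_0)\,d\omega$; the constant part gives $4C_1 W_0$ while the $\cos$ part vanishes because $\sin(W_0 t_0)=0$. For each sidelobe, shifting by $\omega' = \omega + W_i$ leaves $2D_i\bar C_i\int_{-W_0/2}^{W_0/2}(1+\cos(2\omega' t_0))\,d\omega'$; the same cancellation kills the $\cos$ part and the constant part yields $2D_i\bar C_i W_0$. Summing over $i$ reproduces \eqref{eq:Lambda}.

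For the numerator $\Gamma$, the $C_0\delta(\omega)$ term contributes zero. For the mainlobe, $\int_{-W_0}^{W_0}\omega^2 d\omega = 2W_0^3/3$ produces $\frac{4C_1 W_0^3}{3}$, while $\int_{-W_0}^{W_0}\omega^2\cos(\omega t_0)\,d\omega$ is evaluated by two integrations by parts; under $\sin(W_0 t_0)=0$ and $\cos(W_0 t_0)=-1$ only the middle boundary piece survives, yielding $-\frac{8C_1 W_0}{t_0^{2}}$ after multiplication by $2C_1$. For each sidelobe, I would again change variables to $\omega' = \omega + W_i$, so that $\omega^2 = (\omega')^2 - 2W_i\omega' + W_i^2$. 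The cross term $-2W_i\omega'$ integrates to zero against the even window (and against $\cos(2\omega' t_0)$), the $W_i^2$ term combined with the $1$ in $1+\cos(2\omega' t_0)$ gives $2D_i\bar C_i W_i^{2} W_0$, the $(\omega')^2$ constant part gives a $W_0^3$ contribution, and $\int_{-W_0/2}^{W_0/2}(\omega')^2\cos(2\omega' t_0)\,d\omega'$ is handled by the same integration-by-parts calculation with $\sin(\pi)=0$, $\cos(\pi)=-1$, producing the $-\frac{D_i\bar C_i W_0}{t_0^{2}}$ piece. Assembling all pieces gives \eqref{eq:Gamma}.

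The main obstacle is the careful bookkeeping of the two integration-by-parts evaluations of $\int \omega^{2}\cos(\cdot)\,d\omega$: each one produces three boundary terms, and only by systematically applying $W_0 t_0 = \pi$ do the oscillatory pieces vanish and the correct coefficients of $1/t_0^{2}$ emerge. Once $\Gamma$ and $\Lambda$ are collected in this form, taking the square root of the ratio completes the proof of \eqref{eq:DS}.
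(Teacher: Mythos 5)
Your proposal is correct and follows essentially the same route as the paper's Appendix~E: substitute the PSD of Lemma~\ref{Lemma:PSD} into \eqref{eq:DS_def} and integrate each windowed component term by term, using $W_0t_0=\pi$ to annihilate the oscillatory boundary terms. One remark: carrying out your sidelobe computation exactly yields $\tfrac{D_i\bar{C}_iW_0^3}{6}$ for the $(\omega')^2$ piece (consistent with the paper's own appendix and with \eqref{eq:Gamma_2}), rather than the $\tfrac{4D_i\bar{C}_iW_0^3}{6}$ printed in \eqref{eq:Gamma}, which appears to be a typo in the lemma statement.
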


\begin{proof}
See Appendix~\ref{Proof:DS}.
\end{proof}

In high-mobility communications, the maximum DFO $f_d$ that relates to the moving speed will directly affect the system performance. Moreover, when applying a large-scale antenna array, the number of the transmit antennas $M$ is also a critical parameter. To further show the explicit relationship between the Doppler spread and $f_d$ or $M$, we have the following lemma.

\begin{lemma}\label{Lemma:DS_App}
The Doppler spread $\sigma_{DS}$ is proportional to the maximum DFO $f_d$, and when $M$ is sufficiently large, $\sigma_{DS}$ can be approximated as
\begin{equation}\label{eq:DS_App}
\sigma_{DS} \simeq 2\pi\kappa f_d (\ln(4M))^{-\frac{1}{2}} M^{-\frac{1}{2}},
\end{equation}
where $\kappa$ is a coefficient independent of both $f_d$ and $M$. Thus, the Doppler spread $\sigma_{DS}$ decreases approximately as $1/\sqrt{M}$ with increasing $M$ when $M$ is sufficiently large.
\end{lemma}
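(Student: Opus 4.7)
The claim has two parts: (i) $\sigma_{DS}\propto f_d$, and (ii) the explicit $1/\sqrt{M\ln(4M)}$ scaling for large $M$. My plan is to treat $f_d$ and $M$ as independent parameters and extract leading-order asymptotics directly from~\eqref{eq:DS}--\eqref{eq:Gamma}.

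For part (i), I would do a direct dimensional inspection. Writing $\omega_d=2\pi f_d$ and scanning the definitions collected in Lemmas~\ref{Lemma:R_m}--\ref{Lemma:DS}, the quantities $W_0$, $W_i$, and $1/t_0$ are linear in $\omega_d$, $C_1$ is proportional to $1/\omega_d$, and $C_0$, $D_i$, $\Delta_m$, $\theta_t$, and the product $\omega_d\bar C_i$ are $\omega_d$-free (the $1/\omega_d$ prefactor in~\eqref{eq:Ci} pulls out and the appendix integration limits $\bar\theta_t$, $a_i$, $b_i$ are geometric quantities with no $f_d$ dependence). With these scalings, every term of $\Lambda$ in~\eqref{eq:Lambda} is $\omega_d$-free and every term of $\Gamma$ in~\eqref{eq:Gamma} is exactly proportional to $\omega_d^2$. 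Hence $\sigma_{DS}^2=\Gamma/\Lambda$ is $\omega_d^2$ times an $f_d$-free quantity, giving $\sigma_{DS}\propto f_d$.

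For part (ii), I fix $f_d$ and expand each ingredient in $1/M$. From the appendix definitions one obtains $\Delta_m,\theta_t=\Theta(1/M)$, so that $-\ln\tan(\theta_t/2)=\ln(4M)+O(1)$; this gives $C_1=\omega_d^{-1}(\ln(4M)+O(1))$ and, combined with $W_0\sim\omega_d\lambda/(Md)$, produces the dominant mainlobe term of $\Lambda$, namely $4C_1W_0\sim(8\lambda/(Md))\ln(4M)$. The sidelobe part of $\Lambda$ is estimated by expanding $D_i\simeq 4/(\pi^2(2i+1)^2)$ for $|i|\ll M$, Taylor-expanding the integrand of~\eqref{eq:Ci} around $u_i=0$ to secure $\omega_d\bar C_i=\Theta(1)$ uniformly in $i$, and then using the absolute convergence of $\sum(2i+1)^{-2}$ to obtain $\sum 2D_i\bar C_i W_0=O(1/M)$, which is subdominant to the $\ln(4M)/M$ mainlobe piece. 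Thus $\Lambda\simeq\kappa_1\ln(4M)/M$ for a constant $\kappa_1>0$ independent of $f_d$ and $M$.

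The dominant contribution to $\Gamma$, by contrast, comes from the sidelobe term $\sum 2D_i\bar C_i W_0 W_i^2$: the factor $W_i^2=(2i+1)^2\lambda^2\omega_d^2/(4M^2d^2)$ precisely cancels the $(2i+1)^{-2}$ decay of $D_i$, so each summand is asymptotically $\Theta(\omega_d^2/M^3)$ uniformly in $i$, and the $\Theta(M)$ admissible indices produce $\sum\sim\Theta(\omega_d^2/M^2)$. The mainlobe contributions $\tfrac{4}{3}C_1W_0^3$ and $-8C_1W_0/t_0^2$ are only $O(\omega_d^2\ln(4M)/M^3)$ and the remaining sidelobe pieces $D_i\bar C_i W_0^3$ and $D_i\bar C_i W_0/t_0^2$ are $O(\omega_d^2/M^3)$, all negligible. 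Collecting the leading orders,
\begin{equation*}
\Lambda\simeq\kappa_1\,\frac{\ln(4M)}{M},\qquad \Gamma\simeq\kappa_2\,\frac{\omega_d^2}{M^2},
\end{equation*}
with $\kappa_2>0$ independent of $f_d$ and $M$. Therefore $\sigma_{DS}^2\simeq(\kappa_2/\kappa_1)\,\omega_d^2/(M\ln(4M))$, and taking square roots with $\omega_d=2\pi f_d$ and $\kappa=\sqrt{\kappa_2/\kappa_1}$ yields~\eqref{eq:DS_App}. The principal obstacle is the uniform-in-$i$ control of $\bar C_i$ needed to justify term-by-term approximation of the $O(M)$-long sums in both $\Lambda$ and $\Gamma$; once that is in place the rest is elementary algebra.
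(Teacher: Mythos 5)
Your part (i) dimensional argument is correct (and is a cleaner statement of what the paper only observes at the very end), and your identification of the dominant mechanism in $\Gamma$ --- the sidelobe sum weighted by $W_i^2$, with $\Theta(M)$ admissible indices producing $\Gamma=\Theta(\omega_d^2/M^2)$ --- matches the paper's finding that $\Gamma_2\propto M$ dominates. The genuine gap sits exactly where you flag the ``principal obstacle'': the claim that $\omega_d\bar C_i=\Theta(1)$ uniformly in $i$ is false. The paper reduces $\omega_d\bar C_i$ to $F(\mu_i)$, the complete elliptic integral of the first kind with modulus $\mu_i=\sqrt{1-u_i^2/4}$, $u_i=(2i+1)\lambda/(2Md)$. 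For $i\ll M$ the modulus tends to $1$ and $F(\mu_i)\sim\ln\left(8/u_i\right)=\Theta\!\left(\ln\frac{M}{2i+1}\right)$; equivalently, as $u_i\to0$ the integrand in \eqref{eq:Ci} develops logarithmically divergent endpoint singularities (it degenerates to $\int d\theta/\sin\theta$), so no Taylor expansion about $u_i=0$ yields a uniform $O(1)$ bound. Consequently your estimate $\sum_i 2D_i\bar C_iW_0=O(1/M)$ is wrong: the $i=\pm1$ terms alone contribute $\Theta(\ln M/M)$, and the whole sidelobe part of $\Lambda$ is $\Theta(\ln M/M)$ --- the \emph{same} order as the mainlobe term $4C_1W_0$, not subdominant to it. The paper handles this by writing the sidelobe sum as $\frac{\lambda}{Md}\Lambda_1$ with $\Lambda_1=\sum 4D_iF(\mu_i)$, splitting $D_i$ into three regimes, converting to an integral of $F$, and invoking a table integral to obtain the upper bound $\Lambda_{1,1}\le\frac{2}{\pi}\ln 2M$, which it then compares against the mainlobe coefficient $4\ln(4M)$.

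The error happens to be benign for the final scaling law: because the sidelobe contribution to $\Lambda$ grows at the same logarithmic rate as the mainlobe one, the denominator is still of order $\ln(4M)/M$ and only the constant $\kappa$ is affected, so \eqref{eq:DS_App} survives. But as written your argument does not establish this, since it rests on a false intermediate bound; and the same uniformity issue mildly infects your per-summand estimate for $\Gamma$: the approximation $D_i\simeq 4/((2i+1)^2\pi^2)$ holds only for $i\lesssim M/\pi$, after which the other two branches of the piecewise form of $D_i$ must be used (fortunately $(2i+1)^2D_i=\Theta(1)$ in every branch, and the logarithmic factors in $F(\mu_i)$ sum to $O(M)$ over $\Theta(M)$ indices, so the $\Theta(\omega_d^2/M^2)$ conclusion for $\Gamma$ is unaffected). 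To repair the proof you need a logarithmic (elliptic-integral) bound on $\bar C_i$ together with a regime-by-regime treatment of $D_i$ --- which is precisely the machinery the paper's proof deploys.
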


%

\begin{proof}
Bearing in mind that $\theta_t$ is determined by \eqref{eq:theta_t}, since $\theta_t$ is small, we have ${\Delta_m}{\theta_t}\simeq {\Delta_m}\sin{\theta_t}=\frac{\lambda}{Md}$. Then, $C_0$ is simplified as
\begin{equation}
  C_0 = \frac{8{\Delta_m}{\theta_t}}{\pi}\simeq\frac{8\lambda}{\pi Md}.
\end{equation}

When $M$ is large, $\Delta_m$ always satisfies the following equation
\begin{equation}
  \Delta_m = \arccos\left(1-\frac{\lambda}{Md}\right) \simeq \frac{\lambda}{\sqrt{M}d}.
\end{equation}

Then, we can simply $C_1$ as follows
\begin{align}
  C_1 \simeq & -\frac{2}{w_d}\ln\tan\frac{\sin{\theta_t}}{2} \nonumber\\ =& -\frac{2}{w_d}\ln\tan \frac{\lambda}{2Md\arccos \left( 1-\frac{\lambda}{Md} \right)} \nonumber \\
 \simeq & -\frac{2}{w_d}\ln\tan\frac{\lambda}{2Md\frac{\lambda}{\sqrt{M}d}} \nonumber \\
 \simeq & -\frac{2}{w_d}\ln\frac{1}{2\sqrt{M}} = \frac{1}{w_d}\ln(4M).
\end{align}

It is easy to find that $D_i=D_{-(i+1)}$ and $\bar{C}_i=\bar{C}_{-(i+1)}$ for $i>1$ (See Appendix~\ref{Proof:Eq2}), and $I_\mathrm{max}=-I_\mathrm{min}-1$ holds. Therefore, we only consider $i>0$ in the following analysis. Since $\bar{\theta}_t$ approaches zero if $M$ tends to infinite, we have
\begin{equation}
\bar{C}_i \simeq \frac{1}{w_d}\bar{F}(M,i),
\end{equation}
where the function $\bar{F}(M,i)$ is defined as
\begin{equation}
\bar{F}(M,i) = \int_{0}^{b_i}{\frac{1}{\sqrt{1-{\left( \cos\theta-u_i \right)}^2}} d\theta}.
\end{equation}

Rewrite the function $\bar{F}(M,i)$ as
\begin{align}
  \bar{F}(M,i) 
  =& \int_{0}^{b_i}{\frac{\sin\theta}{\sqrt{1-\cos^2\theta}\sqrt{1-(\cos\theta-u_i)^2}} d\theta} \nonumber\\
  =& \int_{u_i-1}^{1}{\frac{1}{\sqrt{1-y^2}\sqrt{1-(y-u_i)^2}} dy}.
\end{align}

According to the Equation (3.147-4) in~\cite{Gradshteyn07Table}, we have
\begin{equation}\label{eq:F}
  \bar{F}(M,i) = F(\mu_i) = \int_{0}^{\pi/2}\frac{d\xi}{\sqrt{1-\mu_i^2\sin^2\xi}},
\end{equation}
where $\mu_i=\sqrt{1-u_i^2/4}$ with $u_i=\frac{(2i+1)\lambda}{2Md}$. Note that the function $F(\mu_i)$ in \eqref{eq:F} is the standard complete elliptic integral of the first kind \cite{Gradshteyn07Table,Carlson10Elliptic}. It can be further expressed as a power series
\begin{equation}\label{eq:F4}
  F(\mu_i) = \frac{\pi}{2}\sum_{n=0}^\infty \Upsilon(n)\mu_i^{2n},
\end{equation}
where $\Upsilon(n) = \left(\frac{(2n)!}{2^{2n}(n!)^2}\right)^2$
is the coefficient for the power series expansion.

We further express $D_i$ into a piecewise linear function, which is comprised of the following three parts
\begin{equation}\label{eq:Di}
D_i=\left\{ \begin{array}{ll}
   \frac{4}{(2i+1)^2\pi^2}, & 1\leq i\leq I_1,  \\
   \frac{1}{M^2}, & I_1 < i \leq I_2,  \\
   \frac{4}{(2M-2i-1)^2\pi^2}, & I_2 < i \leq I_\mathrm{max},
\end{array} \right.
\end{equation}
where $I_1=\lfloor\frac{M}{\pi}-\frac{1}{2}\rfloor$, $I_2=\lfloor M(1-\frac{1}{\pi})-\frac{1}{2}\rfloor$, and $I_\mathrm{max}=\lfloor\frac{2Md}{\lambda}-\frac{1}{2}\rfloor$.

Now we reconsider $\Lambda$ and $\Gamma$ based on the above observations. First, rewrite $\Lambda$ in \eqref{eq:Lambda} into
\begin{align}\label{eq:Lambda_1}
  \Lambda & = C_0+4{C_1}{W_0}+2\times\sum\limits_{1\leq i \leq I_\mathrm{max}}2{D_i}{\bar{C}_i}{W_0} \nonumber\\
 & \simeq\frac{\lambda}{Md}\left( \frac{8}{\pi}+4\ln \left(4M\right)+\Lambda_1 \right),
\end{align}
where
\begin{align}\label{eq:Lambda1_1}
  \Lambda_1 = \sum\limits_{1\leq i\leq I_\mathrm{max}}4{D_i}F(\mu_i).
\end{align}

According to the approximation for $D_i$ in \eqref{eq:Di}, $\Lambda_1$ can be further divided into three terms, that is
\begin{align}\label{eq:Lambda1_2}
  \Lambda_1 = \Lambda_{1,1}+\Lambda_{1,2}+\Lambda_{1,3},
\end{align}
where
\begin{equation}\label{eq:Lambda11}
  \Lambda_{1,1} = \sum\limits_{1\leq i\leq I_1} \frac{16}{(2i+1)^2\pi^2}\cdot \frac{\pi}{2}\sum_{n=0}^{\infty} \Upsilon(n)\mu_i^{2n},
\end{equation}

\begin{equation}\label{eq:Lambda12}
  \Lambda_{1,2} = \sum\limits_{I_1< i\leq I_2} \frac{4}{M^2}\cdot \frac{\pi}{2}\sum_{n=0}^{\infty} \Upsilon(n)\mu_i^{2n},
\end{equation}

\begin{equation}\label{eq:Lambda13}
  \Lambda_{1,3} = \sum\limits_{I_2< i\leq I_\mathrm{max}} \frac{16}{(2M-2i-1)^2\pi^2}\cdot \frac{\pi}{2}\sum_{n=0}^{\infty} \Upsilon(n)\mu_i^{2n}.
\end{equation}

As discussed in Appendix~\ref{Proof:Lambda}, we have the following observations:
\begin{itemize}
\item $\Lambda_{1,1}$ increases no faster than $\frac{2}{\pi}\ln{2M}$ when increasing $M$;

\item Both $\Lambda_{1,2}$ and $\Lambda_{1,3}$ decrease approximately as $1/M$ when increasing $M$, thus they can be neglected when $M$ is sufficiently large.
\end{itemize}


Then, we rewrite $\Gamma$ in \eqref{eq:Gamma} as
\begin{align}\label{eq:Gamma_2}
  \Gamma = & \frac{4{C_1}{W_0^3}}{3}-\frac{8{C_1}{W_0}}{t_0^2} \nonumber\\ 
  + & 2\times\sum_{1\leq i\leq I_\mathrm{max}}\!\! \left( \frac{{D_i}{\bar{C}_i}{W_0^3}}{6}-\frac{{D_i}{\bar{C}_i}{W_0}}{{t}_0^2}+2{D_i}{\bar{C}_i}{W_0}{W_i^2}\right) \nonumber\\
 \simeq & {w_d}^2{\left(\frac{\lambda}{Md}\right)^3}\left( (\frac{4}{3}-\frac{8}{\pi^2})\ln(4M)+\Gamma_1+\Gamma_2 \right),
\end{align}
where
\begin{align}
  \Gamma_1 = \sum\limits_{1\le i\le I_{max}} (\frac{1}{3}-\frac{2}{\pi^2}){D_i}F(\mu_i),
\end{align}
\begin{align}
  \Gamma_2 = \sum\limits_{1\le i\le I_{max}} (2i+1)^2{D_i}F(\mu_i).
\end{align}

Similarly to the analysis for $\Lambda_1 $, we directly arrive at
\begin{align}
  \Gamma_1 < \frac{1}{2\pi}(\frac{1}{3}-\frac{2}{\pi^2})\ln{2M},
\end{align}
which shows that $\Gamma_1$ increases no faster than $\frac{1}{2\pi}(\frac{1}{3}-\frac{2}{\pi^2})\ln{2M}$ when increasing $M$.

We further divide $\Gamma_2$ into three terms according to the values of $D_i$ in \eqref{eq:Di}, that is
\begin{align}\label{eq:Gamma2}
  \Gamma_2 = \Gamma_{2,1}+\Gamma_{2,2}+\Gamma_{2,3},
\end{align}
where
\begin{equation}\label{eq:Gamma21}
  \Gamma_{2,1} = \sum\limits_{1\leq i\leq I_1} \frac{4(2i+1)^2}{(2i+1)^2\pi^2}\cdot \frac{\pi}{2}\sum_{n=0}^{\infty} \Upsilon(n)\mu_i^{2n},
\end{equation}

\begin{equation}\label{eq:Gamma22}
  \Gamma_{2,2} = \sum\limits_{I_1< i\leq I_2} \frac{(2i+1)^2}{M^2}\cdot \frac{\pi}{2}\sum_{n=0}^{\infty} \Upsilon(n)\mu_i^{2n},
\end{equation}

\begin{equation}\label{eq:Gamma23}
  \Gamma_{2,3} = \sum\limits_{I_2< i\leq I_\mathrm{max}} \frac{4(2i+1)^2}{(2M-2i-1)^2\pi^2}\cdot \frac{\pi}{2}\sum_{n=0}^{\infty} \Upsilon(n)\mu_i^{2n}.
\end{equation}

We have the following observations (See Appendix~\ref{Proof:Gamma2}):

\begin{itemize}
\item $\Gamma_{2,1}$ increases no faster than $M$ when increasing $M$;

\item $\Gamma_{2,2}$ is proportional to $M$ and increases faster than $\frac{6}{\pi^2}(1-2/\pi)\cdot M$ when increasing $M$;

\item $\Gamma_{2,3}$ is proportional to $M$ and increases faster than $\frac{6}{\pi^2}(2d/\lambda-1+1/\pi)\cdot M$ when increasing $M$.
\end{itemize}

Finally, by substituting \eqref{eq:Lambda_1} and \eqref{eq:Gamma_2} into \eqref{eq:DS}, we arrive at
\begin{equation}
  \sigma_{DS} \simeq \frac{2\pi\lambda f_d}{Md}\left( \frac{\left(\frac{4}{3}-\frac{8}{\pi^2}\right)\ln(4M)+\Gamma_1+\Gamma_2}{\frac{8}{\pi}+4\ln(4M)+\Lambda_1} \right)^\frac{1}{2}.
\end{equation}

It is obvious that the Doppler spread $\sigma_{DS}$ is proportional to $f_d$.
According to the analysis for $\Lambda_1$, $\Gamma_1$ and $\Gamma_2$, we have the following conclusions when $M$ is sufficiently large:

\begin{itemize}
\item In the denominator, $\Lambda_1$ increases slower than the term $4\ln(4M)$ with increasing $M$, which means that the variation with $M$ is mainly determined by $4\ln(4M)$.

\item In the numerator, $\Gamma_2$ increases as $M$ with increasing $M$. Compared with $\Gamma_2$, the variation of both $(\frac{4}{3}-\frac{8}{\pi^2})\ln(4M)$ and $\Gamma_1$ can be neglected.
\end{itemize}

Based on the above important observations, we further approximate $\sigma_{DS}$ for large value of $M$, that is
\begin{equation}\label{eq:DS_4}
\sigma_{DS} \simeq 2\pi\kappa f_d (\ln(4M))^{-\frac{1}{2}} M^{-\frac{1}{2}},
\end{equation}
where $\kappa$ is a coefficient independent of $f_d$ and $M$. Thus, we finally arrive at \eqref{eq:DS_App}. Since the function $(\ln(4M))^{-\frac{1}{2}}$ changes quite slowly in the range of large $M$, we can deduce from~\eqref{eq:DS_4} that the Doppler spread $\sigma_{DS}$ decreases approximately as $1/\sqrt{M}$ with increasing $M$ when $M$ is sufficiently large. This completes the proof.
\end{proof}




\section{Simulation Results}

\begin{table}[t!]
	\renewcommand{\arraystretch}{1.2}
	\footnotesize
	\caption{Simulation Parameters\vspace*{-0.0in}}%
	\centering 
	\label{table:SimuPara-UL}
	\begin{tabular}{|l|l|}
		\hline number of subcarriers & 128\\
		\hline carrier frequency & 3GHz\\
		\hline wavelength of carrier wave & $\lambda=0.1\mathrm{m}$ \\
		\hline number of blocks in each frame & 5\\
		\hline duration of each block & $T_b=0.1\mathrm{ms}$\\
		\hline normalized maximum DFO & $f_d T_b=0.1$ for $v=360\mathrm{km/h}$\\
		\hline modulation type & 16QAM\\
		\hline \multirow{2}{*}{antenna configuration} & ULA at both Tx and Rx \\
		\cline{2-2}
		&Tx: 128/256/512/1024, Rx: 4\\
		\hline  \multirow{3}{*} {channel parameter}
		& tap number: 6\\
		\cline{2-2}
		& path in each tap: 64\\
		\cline{2-2}
		& maximum channel delay: 16 \\
		\hline uplink receiver type & MRC-LS \\
		\hline
	\end{tabular}
	\vspace*{-0.0in}
\end{table}

In this section, we evaluate the performance of the proposed uplink transmission scheme through numerical simulations. We consider the frame structure in an OFDM system, where the first block is the training block and the remaining are used for data symbols.
We assume the Jakes' channel model between the BS and the moving terminal as in~\cite{Zheng03Simulation}.
The transmit antenna element spacing is $d_t=0.45\lambda$ and the beamforming network is designed with the interval of $2^\circ$. The other simulation parameters are shown in Table~\ref{table:SimuPara-UL}.

First, we compare the PSD of the equivalent uplink channel expressed in~\eqref{eq:PSD} with the Jakes' channel introduced in~\cite{Jakes94Microwave}, where the maximum DFO is chosen as $f_d=1\mathrm{KHz}$ and the number of transmit antennas is $M=128$. It is observed from Fig.~\ref{fig:PSD} that the PSD of the equivalent uplink channel concentrates around the frequency of zero while the PSD of the Jakes' channel is around the maximum DFO $f_d$. The results demonstrate that the uplink channel is mainly affected by the smaller DFOs after pre-processing at the transmitter. Note that the range for PSD is doubled in the proposed scheme which is caused by the residual DFOs after the DFO compensation. It is also seen that the PSD of the equivalent uplink channel contains mainlobe and sidelobes with the widths of $2W_0$ and $W_0$, respectively. This is not unexpected and has been verified in Lemma~\ref{Lemma:PSD}.

\begin{figure}[t!]
    \centering
    \subfigure[]{\label{fig:PSD_Jakes}\includegraphics[scale=0.55]{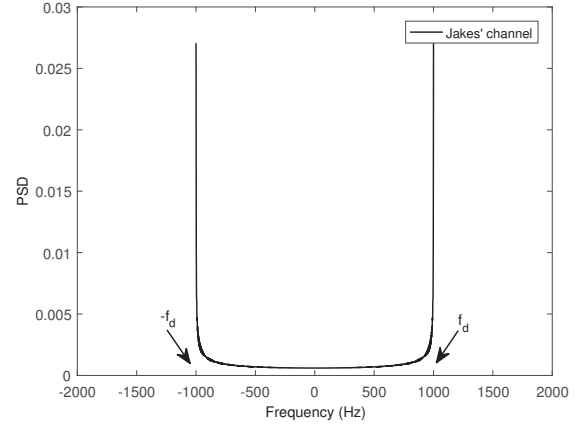}}
    \subfigure[]{\label{fig:PSD_EqULChn}\includegraphics[scale=0.55]{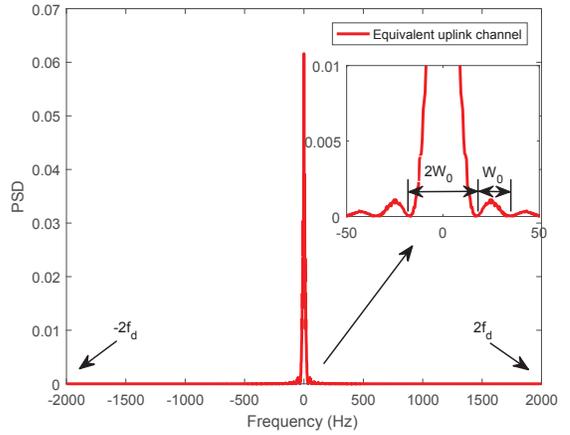}}
    \caption{The PSD comparison: (a) conventional Jakes' channel; (b) the equivalent uplink channel.}
    \label{fig:PSD}
    \vspace*{-0.0in}
\end{figure}

Next, we show the Doppler spread for the equivalent uplink channel as the maximum DFO increases in Fig.~\ref{fig:DS_fd}.
The accurate Doppler spread in dashed curves is calculated by using the definitions in~\eqref{eq:PSD_def} and~\eqref{eq:DS_def} with numerical integration while the analytical approximation in solid curves is obtained by using~\eqref{eq:PSD} and~\eqref{eq:DS}. For comparison, we also include the Doppler spread of the Jakes' channel.
It is clear that the Doppler spread is significantly reduced after the proposed transmit processing. We also make the following observations: 1) The approximation of Doppler spread gets closer to the accurate ones when $M$ gets larger; 2) The Doppler spread of the equivalent channel is proportional to $f_d$ which verifies the analysis in Lemma~\ref{Lemma:DS_App}, and the slope is determined by $M$. With the increase of $M$, the slope of Doppler spread reduces which means that the time variation of channel is mitigated more evidently.

In Fig.~\ref{fig:DS_ant}, we further display the Doppler spread for the equivalent uplink channel with different numbers of transmit antennas $M$. The different markers correspond to different maximum DFOs. The results further verify the accuracy for the approximation of the Doppler spread. Moreover, we can observe that with fixed maximum DFO, the Doppler spread reduces linearly with the increase of $M$ under logarithmic coordinates. The slope is approximately characterized by $-1/2$ as discussed in Lemma~\ref{Lemma:DS_App}.

Finally, we evaluate the symbol error rate (SER) performance of the proposed scheme. The SER performance under different values of $M$ is plotted in Fig.~\ref{fig:SER_SNR}. We assume $f_d=1\mathrm{KHz}$ in this example. For comparison, the conventional transmission schemes with and without DFOs are also included as the benchmark, labeled as Conventional-DFOs and Conventional-NoDFOs, respectively. In conventional transmission, there is no Doppler compensation at both transmitter and receiver while the signal is transmitted through multiple beamforming as in the proposed scheme. Moreover, the conventional time-invariant channel estimation based on least square (LS) is performed at the receiver in all schemes.
The results demonstrate the effectiveness of the proposed scheme. Especially, our scheme outperforms Conventional-DFOs dramatically since Conventional-DFOs suffers from high Doppler spread. From this figure, the SER performance of the proposed scheme gets closer to Conventional-NoDFOs when more antennas are configured on the RS. We can conclude that when the number of transmit antennas is sufficiently large, the BS can completely neglect the time variation of channel and exploit the conventional channel estimation and equalization methods.



\begin{figure}[t]
\centering
\includegraphics[scale=0.55]{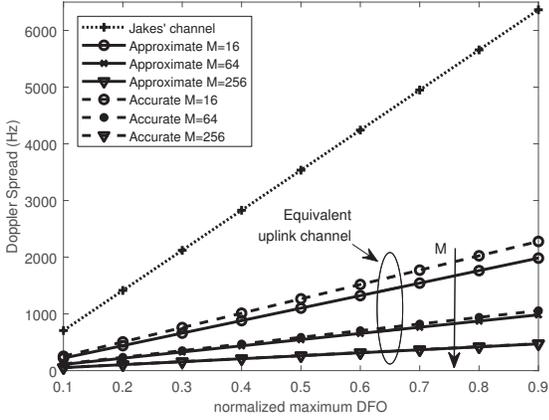}
\caption{The Doppler spread of the equivalent uplink channel as the maximum DFO increases. The X-axis is the normalized maximum DFO by the length of the OFDM block, that is, $f_dT_b$.}
\label{fig:DS_fd}
\vspace*{-0.0in}
\end{figure}

\begin{figure}[t]
\centering
\includegraphics[scale=0.55]{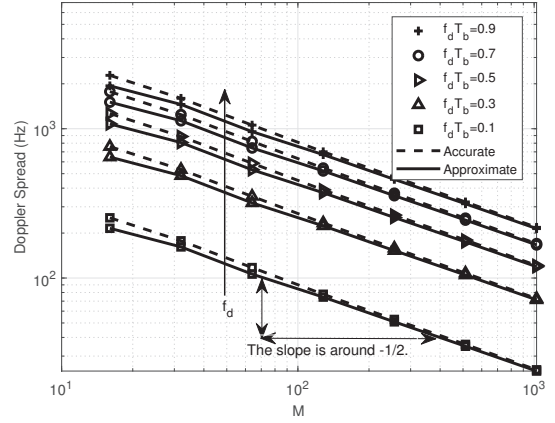}
\caption{The Doppler spread of the equivalent uplink channel as the number of transmit antennas increases.}
\label{fig:DS_ant}
\vspace*{-0.0in}
\end{figure}

\begin{figure}[t]
\centering
\includegraphics[scale=0.55]{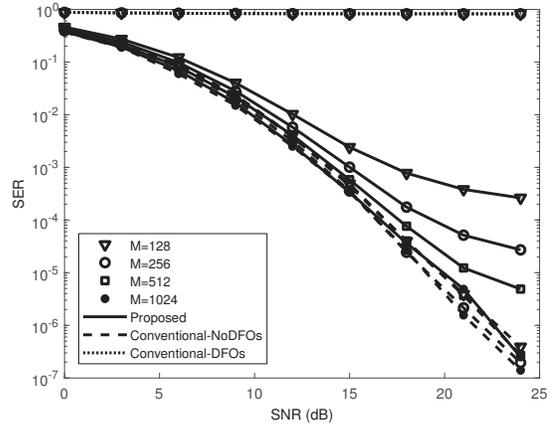}
\caption{The SER performance comparison as the number of transmit antennas increases. The solid curves correspond to the SER performance of the proposed scheme.}
\label{fig:SER_SNR}
\vspace*{-0.0in}
\end{figure}


\section{Conclusion}
In this paper, we have proposed a new transmitter design scheme for high-mobility OFDM uplink transmission, where there exist a large number of DFOs. A massive ULA was configured at the transmitter to provide multiple beamforming branches with high-spatial resolution. The DFO compensation and transmit beamforming were adopted to mitigate the effect of the DFOs previously. We found that the equivalent uplink channel can be considered as time-invariant after the transmit pre-processing and the conventional channel estimation methods can be used at the receiver to recover the transmitted data. Especially, through the analysis of the Doppler spread for the equivalent uplink channel, we derived the asymptotic scaling law of channel variation. The simulation results have shown the effectiveness of the proposed scheme. 


\appendices
\section{Proof of Lemma~\ref{Lemma:ChnCorr}}{\label{Proof:ChnCorr}}
Let us define the autocorrelation for the equivalent fading channel as
\begin{equation}\label{eq:R_def}
R_{\tilde{g}\tilde{g}}(t,\tau) = \mathrm{E}\{\tilde{g}(t)\tilde{g}^*(t+\tau)\}.
\end{equation}

After substituting~\eqref{eq:g_chn_bf} into~\eqref{eq:R_def}, we have
\begin{align}
R_{\tilde{g}\tilde{g}}(t,\tau) & = E_0^2 \int_0^{\pi} \int_0^{\pi} \int_0^{\pi} \int_0^{\pi} \mathrm{E}\left\{\alpha(\tilde{\theta})\alpha^*(\tilde{\theta}')\right\} \nonumber\\
& \cdot \left\{G(\theta,\tilde{\theta})G^*(\theta',\tilde{\theta}') e^{j\omega_d{t}y(\theta,\tilde{\theta})} e^{-j\omega_d(t+\tau)y(\theta',\tilde{\theta}')}\right\} \nonumber\\
& \cdot \mathrm{E}\left\{e^{j\varphi(\theta,\tilde{\theta})-j\varphi(\theta',\tilde{\theta}')}\right\} d\tilde{\theta} d\theta d\tilde{\theta}' d\theta'.
\end{align}

Since the random phase $\phi(\tilde{\theta})$ and $\phi'(\theta)$ are statistically independent and are randomly selected between $0$ and $2\pi$, one can obtain: $\mathrm{E}\{e^{j\varphi(\theta,\tilde{\theta})-j\varphi(\theta',\tilde{\theta}')}\}= \mathrm{E}\{e^{j(\phi(\tilde{\theta})-\phi(\tilde{\theta}'))}\} \mathrm{E}\{e^{j(\phi'(\theta)-\phi'(\theta'))}\}$, $\mathrm{E}\{e^{j(\phi(\tilde{\theta})-\phi(\tilde{\theta}'))}\} = \left\{\! \begin{array}{ll} 0 & \mathrm{if}\ \tilde{\theta}\neq\tilde{\theta}' \\ 1 & \mathrm{if}\ \tilde{\theta}=\tilde{\theta}' \end{array} \right.$, and $\mathrm{E}\left\{e^{j(\phi'(\theta)-\phi'(\theta'))}\right\} = \left\{\! \begin{array}{ll} 0 & \mathrm{if}\ {\theta}\neq{\theta}' \\ 1 & \mathrm{if}\ {\theta}={\theta}' \end{array} \right.$.

According to these facts, we readily arrive at~\eqref{eq:R}. It is observed that the autocorrelation in~\eqref{eq:R} is independent of the time index $t$. This completes the proof.

\section{Proof of Lemma~\ref{Lemma:R_m}}{\label{Proof:R_m}}
It is clear that the maximum beam width of the mainlobe appears when beamforming towards $\theta=0$ or $\pi$, that is, both $\Delta_{l}$ and $\Delta_{u}$ in \eqref{eq:R_m} are upper bounded by
\begin{equation}\label{eq:Delta_m}
  \Delta_{m} = \arccos(1-\frac{\lambda}{Md}).
\end{equation}

Using the first order Taylor expansion, we can approximate $\Delta_{u}$ as
\begin{equation}
  \Delta_{u} \simeq (\theta+\frac{\lambda}{Md}\frac{1}{\sqrt{1-\cos^2\theta}})-\theta=\frac{\lambda}{Md\sin\theta}\triangleq \Delta\theta,
\end{equation}
which is the function of the beamforming angle $\theta$. Similarly, $\Delta_{l}$ is also approximated as $\Delta\theta$. Note that $\Delta\theta$ turns to infinite if $\theta=0$ or $\pi$, which is clearly unreasonable.
To restrict $\Delta\theta$ smaller than the upper bound $\Delta_{m}$, we further adjust $\Delta\theta$ as
\begin{equation}\label{eq:Delta_theta}
  \Delta\theta =\! \left\{\!\! \begin{array}{ll} \frac{\lambda}{Md\sin\theta}, & \theta_t<\theta<\pi-\theta_t,\\ \Delta_{m}, & 0\leq\theta\leq\theta_t \ \mathrm{and}\ \pi-\theta_t\leq\theta\leq\pi, \end{array} \right.
\end{equation}
where $\theta_t$ is the threshold determined by
\begin{equation}\label{eq:theta_t}
  \theta_t: \frac{\lambda}{Md\sin\theta_t}=\arccos(1-\frac{\lambda}{Md}).
\end{equation}


Furthermore, when the number of antennas $M$ is large, the antenna gain of the mainlobe can be expressed as a cosine function approximately, that is
\begin{equation}\label{eq:gain_1}
  |G(\theta,\tilde{\theta})|^2=\frac{\sin^2\pi M\psi} {M^2\sin^2\pi\psi} \simeq \cos^2(\frac{\pi M\psi}{2}),
\end{equation}
where $\psi=\frac{d}{\lambda}y(\theta,\tilde{\theta})$, and $\tilde{\theta}$ lies in the range of the mainlobe.

Based on \eqref{eq:Delta_theta} and \eqref{eq:gain_1}, $R_{\tilde{g}\tilde{g}}^{(m)}(\tau)$ in \eqref{eq:R_m} can be approximated as
\begin{equation}\label{eq:R_m_1}
  R_{\tilde{g}\tilde{g}}^{(m)}(\tau) \!\simeq\! \frac{2}{\pi} \int_0^{\pi}\! \int_{\theta-\Delta\theta}^{\theta+\Delta\theta} \!\!\cos^2(Ay(\theta,\tilde{\theta})) e^{-jBy(\theta,\tilde{\theta})} d\tilde{\theta} d\theta,
\end{equation}
where $A=\frac{\pi Md}{2\lambda}$ and $B=\omega_d \tau$. Denote $\tilde{\theta}=\theta+\varepsilon$, where $\varepsilon$ lies in a small range of $\varepsilon\in[-\Delta\theta,+\Delta\theta]$ if $M$ is large. Using the first order Taylor expansion, $y(\theta,\tilde{\theta})$ can be approximated as: $y(\theta,\tilde{\theta})=\cos(\theta+\varepsilon)-\cos{\theta}\simeq -\varepsilon\sin\theta$. Then, the inner integral in \eqref{eq:R_m_1} turns to
\begin{align}\label{eq:gamma}
  \gamma(\theta) 
  = & \int_{-\Delta\theta}^{\Delta\theta} \cos^2(A\varepsilon\sin\theta) e^{jB\varepsilon\sin\theta} d\varepsilon \nonumber\\
  = & \int_{-\Delta\theta}^{\Delta\theta} \left(\frac{1}{2}\cos(2A\varepsilon\sin\theta)e^{jB\varepsilon\sin\theta}+\frac{1}{2}e^{jB\varepsilon\sin\theta}\right) d\varepsilon \nonumber\\
  = & \frac{-B\sin(B{\Delta\theta}{\sin\theta})\cos(2A{\Delta\theta}{\sin\theta})}{(4A^2-B^2)\sin\theta} \nonumber\\ + & \frac{2A\cos(B{\Delta\theta}{\sin\theta})\sin(2A{\Delta\theta}{\sin\theta})}{(4A^2-B^2)\sin\theta} + \frac{\sin(B{\Delta\theta}{\sin\theta})}{B\sin\theta}.
\end{align}

According to the value of $\Delta\theta$ in different ranges given in \eqref{eq:Delta_theta}, we obtain
\begin{align}\label{eq:R_m_2}
  R_{\tilde{g}\tilde{g}}^{(m)}(\tau) = R_{1}^{(m)}(\tau)+R_{2}^{(m)}(\tau)+R_{3}^{(m)}(\tau),
\end{align}
where $R_{1}^{(m)}(\tau)=\frac{2}{\pi}\int_0^{\theta_t}\gamma(\theta)d\theta$, $R_{2}^{(m)}(\tau)=\frac{2}{\pi}\int_{\theta_t}^{\pi-\theta_t}\gamma(\theta)d\theta$, and $R_{3}^{(m)}(\tau)=\frac{2}{\pi}\int_{\pi-\theta_t}^{\pi}\gamma(\theta)d\theta$.

Since $\theta_t$ is small with a large number of transmit antennas, we have $\sin\theta\simeq\theta$, then
\begingroup\makeatletter\def\f@size{9.5}\check@mathfonts
\def\maketag@@@#1{\hbox{\m@th\normalsize\normalfont#1}}%
\begin{align}\label{eq:R1_m}
  R_{1}^{(m)}(\tau) & \simeq \frac{2}{\pi}\int_0^{\theta_t} \left( \frac{-B\sin(B{\Delta\theta}\cdot{\theta})\cos(2A{\Delta\theta}\cdot{\theta})}{(4A^2-B^2)\theta}\right.  \nonumber\\
  + & \left. \frac{2A\cos(B{\Delta\theta}\cdot{\theta})\sin(2A{\Delta\theta}\cdot{\theta})}{(4A^2-B^2)\theta} + \frac{\sin(B{\Delta\theta}\cdot{\theta})}{B\theta}\right)  d\theta \nonumber\\
  = & \frac{2}{\pi}\int_0^{\theta_t} \left(  \frac{\sin((2A+B){\Delta\theta}\cdot\theta)}{2(2A+B)\theta}\right. \nonumber\\+& \left. \frac{\sin((2A-B){\Delta\theta}\cdot\theta)}{2(2A-B)\theta} + \frac{\sin(B{\Delta\theta}\cdot\theta)}{B\theta} \right) d\theta \nonumber\\
  \simeq & \frac{(2A+B){\Delta_m}\theta_t}{\pi(2A+B)} + \frac{(2A-B){\Delta_m}\theta_t}{\pi(2A-B)} + \frac{2B{\Delta_m}\theta_t}{\pi B} \nonumber\\
  = & \frac{4\Delta_{m}\theta_t}{\pi}.
\end{align}
\endgroup

Similarly, $R_{3}^{(m)}(\tau)$ is calculated as
\begin{align}\label{eq:R3_m}
R_{3}^{(m)}(\tau) = R_{1}^{(m)}(\tau) \simeq \frac{4\Delta_m\theta_t}{\pi}.
\end{align}

Since ${\Delta\theta}\cdot{\sin\theta}=\frac{\lambda}{Md}$ if $\theta_t<\theta<\pi-\theta_t$, $R_{2}^{(m)}(\tau)$ can be simplified as
\begin{align}\label{eq:R2_m}
  R_{2}^{(m)}(\tau) = & \frac{2}{\pi}\int_{\theta_t}^{\pi-\theta_t} \!\! \left(\frac{B\sin(B\lambda/{Md})}{(4A^2-B^2)\sin\theta} + \frac{\sin(B\lambda/{Md})}{B\sin\theta}\right) d\theta \nonumber\\
  = & \frac{8A^2\sin(B\lambda/{Md})}{\pi B(4A^2-B^2)} (-2\ln\tan\frac{\theta_t}{2}).
\end{align}

After substituting \eqref{eq:R1_m}, \eqref{eq:R3_m} and \eqref{eq:R2_m} into \eqref{eq:R_m_2}, we arrive at \eqref{eq:R_m_3}. This completes the proof.

\section{Proof of Lemma~\ref{Lemma:R_s}}{\label{Proof:R_s}}
Similar to the discussion for the mainlobe, both $\Delta_{l,i}$ and $\Delta_{u,i}$ in \eqref{eq:R_s} are approximated as
\begin{equation}\label{eq:dtheta_i}
\Delta\theta_i = \left\{ \begin{array}{ll} \frac{\lambda}{2Md\sin\theta_i}, & \bar{\theta}_t<\theta_i<\pi-\bar{\theta}_t,\\
\bar{\Delta}_m, & 0\leq\theta_i\leq\bar{\theta}_t \ \mathrm{and}\ \pi-\bar{\theta}_t\leq\theta_i\leq\pi, \end{array} \right.
\end{equation}
where $\bar{\Delta}_m=\arccos(1-\frac{\lambda}{2Md})$, and $\bar{\theta}_t$ is the threshold for the sidelobe which satisfies
\begin{equation}\label{eq:bartheta_t}
\bar{\theta}_t: \frac{\lambda}{2Md\sin\bar{\theta}_t}=\arccos(1-\frac{\lambda}{2Md}).
\end{equation}

Moreover, according to the relationship in \eqref{eq:theta_i}, we can express $y(\theta,\tilde{\theta})$ as
\begin{align}\label{eq:y_s}
y(\theta,\tilde{\theta}) = \cos\tilde{\theta}-\cos\theta_i+u_i = y(\theta_i,\tilde{\theta})+u_i.
\end{align}

We further consider the antenna gain for the $i$th sidelobe. We see from \eqref{eq:Gain} that the maximum amplitude is mainly determined by the denominator while the time variation of the antenna gain is mainly determined by the numerator. For simplicity, we can approximate the antenna gain as
%
\begin{equation}\label{eq:G_s}
|G(\theta,\tilde{\theta})| \simeq \left| \frac{\sin(\frac{\pi Md}{\lambda}y(\theta,\tilde{\theta}))} {M\sin(\frac{\pi d}{\lambda}y(\theta,\theta_i))}\right|=\left|\frac{\cos(\frac{\pi M d}{\lambda}y(\theta_i,\tilde{\theta}))}{M\sin(\frac{(2i+1)\pi}{2M})}\right|,
\end{equation}
where the maximum amplitude of the sidelobe is obtained when $\tilde{\theta}=\theta_i$ in the denominator.

After substituting \eqref{eq:dtheta_i}, \eqref{eq:y_s} and \eqref{eq:G_s} into \eqref{eq:R_s}, we have
\begin{align}\label{eq:R_s_1}
R_{\tilde{g}\tilde{g},i}^{(s)}(\tau) = \frac{2}{\pi}D_ie^{-jW_i\tau} \int_0^{\pi} \gamma_i(\theta_i) d\theta,
\end{align}
where
\begin{align}
\gamma_i(\theta_i) = \int_{\theta_i-\Delta\theta_i}^{\theta_i+\Delta\theta_i} \cos^2(\bar{A}y(\theta_i,\tilde{\theta})) e^{-jBy(\theta_i,\tilde{\theta})} d\tilde{\theta},
\end{align}
with $\bar{A}=2A=\frac{\pi Md}{\lambda}$, $B=\omega_d \tau$, $D_i=\left|\frac{1}{M\sin(\frac{(2i+1)\pi}{2M})}\right|^2$, and $W_i=u_i\omega_d$. Similar to the calculation in \eqref{eq:gamma}, $\gamma_i(\theta_i)$ can be directly obtained as
\begin{align}
\gamma_i(\theta_i) = & \frac{-B\sin(B{\Delta\theta_i}{\sin\theta_i})\cos(2\bar{A}{\Delta\theta_i}{\sin\theta_i})}{(4\bar{A}^2-B^2)\sin\theta_i} \nonumber\\
+ & \frac{2\bar{A}\cos(B{\Delta\theta_i}{\sin\theta_i})\sin(2\bar{A}{\Delta\theta_i}{\sin\theta_i})}{(4\bar{A}^2-B^2)\sin\theta_i} \nonumber\\
+ & \frac{\sin(B{\Delta\theta_i}{\sin\theta_i})}{B\sin\theta_i}.
\end{align}

After replacing the variable $\theta$ by the variable $\theta_i$ as in \eqref{eq:theta_i}, we can rewrite $R_{\tilde{g}\tilde{g},i}^{(s)}(\tau)$ into
\begin{equation}\label{eq:R_s_2}
R_{\tilde{g}\tilde{g},i}^{(s)}(\tau) = \frac{2}{\pi}D_ie^{-jW_i\tau} \int_{a_{i}}^{b_{i}} \frac{\gamma_i(\theta_i) \sin\theta_i}{\sqrt{1-(\cos\theta_i-u_i)^2}}d\theta_i,
\end{equation}
 where the range for $\theta_i$ is determined by $a_i$ and $b_i$. Since $\theta_i$ satisfies the expression in \eqref{eq:theta_i} and both $\theta$ and $\theta_i$ lie between 0 and $\pi$, $a_i$ and $b_i$ can be determined, respectively, as follows
\begin{equation}\label{eq:a_i}
a_i = \left\{ \begin{array}{ll} 0, & i>0, \\ \arccos(1+u_i), & i<-1. \end{array} \right.
\end{equation}
\begin{equation}\label{eq:b_i}
b_i = \left\{ \begin{array}{ll} \arccos(-1+u_i), & i>0, \\ \pi, & i<-1. \end{array} \right.
\end{equation}

It can be seen that both the upper and lower bounds for the integral in \eqref{eq:R_s_2} vary with index $i$. We further express \eqref{eq:R_s_2} in the following two cases.

For $i>0$: We can relax the lower bound $a_i=0$ to $a_i=\bar{\theta}_t$ when the number of transmit antennas is large. In the following, we prove that the upper bound $b_i$ always satisfy $b_i<\pi-\bar{\theta}_t$ when $M$ is large.

\begin{proof}
Since $\arccos\left(1-\frac{\lambda}{2Md}\right) \simeq \frac{\lambda}{d\sqrt{2M}}$ when $M$ is large, we have $\sin(\bar{\theta}_t) = \frac{\lambda}{2Md\arccos(1-\frac{\lambda}{2Md})}\simeq \frac{1}{\sqrt{2M}}$. Then, we obtain $\cos(\pi-\bar{\theta}_t) = -\sqrt{1-\sin^2(\bar{\theta}_t)} \simeq -\sqrt{1-\frac{1}{2M}}$, and $\cos(b_i) = -1+u_i \geq -(1-\frac{3\lambda}{2Md})$. It is easy to prove that $(1-\frac{3\lambda}{2Md})<\sqrt{1-\frac{1}{2M}}$ when $M$ is large, that is, $\cos(b_i)>\cos(\pi-\bar{\theta}_t)$. Then, we arrive at $b_i<\pi-\bar{\theta}_t$. This completes the proof.
\end{proof}

Then, there holds ${\Delta\theta_i}\cdot{\sin\theta_i}=\frac{\lambda}{2Md}$ if $\bar{\theta}_t<\theta_i<b_i$. We obtain
\begingroup\makeatletter\def\f@size{9.5}\check@mathfonts
\def\maketag@@@#1{\hbox{\m@th\normalsize\normalfont#1}}%
\begin{align}\label{eq:R_s_3}
R_{\tilde{g}\tilde{g},i}^{(s)}(\tau) & \simeq \frac{2}{\pi}D_ie^{-jW_i\tau} \int_{\bar{\theta}_t}^{b_{i}} \frac{\gamma_i(\theta_i)\sin\theta_i}{\sqrt{1-(\cos\theta_i-u_i)^2}}d\theta_i \nonumber\\
=& D_ie^{-jW_i\tau} \frac{8\bar{A}^2\sin(B\lambda/{2Md})}{\pi B(4\bar{A}^2-B^2)} \int_{\bar{\theta}_t}^{b_i}\!\! \frac{d\theta_i}{\sqrt{1-(\cos\theta_i-u_i)^2}} \nonumber\\
=& D_i \bar{C}_i e^{-jW_i\tau}\left\lbrace \frac{2\sin(\frac{{W}_0}{2}\tau)}{\pi\tau}\right. \nonumber\\
+ & \left.\frac{\sin(\frac{{W}_0}{2}(\tau+2t_0))}{\pi(\tau+2t_0)} + \frac{\sin(\frac{{W}_0}{2}(\tau-2t_0))}{\pi(\tau-2t_0)} \right\rbrace ,
\end{align}
\endgroup
where $W_0=\frac{\omega_d\lambda}{Md}$, $t_0=\frac{\pi Md}{\omega_d\lambda}$, and $\bar{C}_i =\frac{1}{\omega_d} \int_{\bar{\theta}_t}^{b_{i}}  \frac{d\theta_i}{\sqrt{1-(\cos\theta_i-u_i)^2}}$.

For $i<-1$: Similar to the above analysis, the lower bound $a_i>\bar{\theta}_t$ always holds. We relax the upper bound $b_i=\pi$ to $b_i=\pi-\bar{\theta}_t$ when $M$ is large, then we have
\begingroup\makeatletter\def\f@size{9.5}\check@mathfonts
\def\maketag@@@#1{\hbox{\m@th\normalsize\normalfont#1}}%
\begin{align}\label{eq:R_s_4}
R_{\tilde{g}\tilde{g},i}^{(s)}(\tau) & \simeq \frac{2}{\pi}D_ie^{-jW_i\tau} \int_{a_i}^{\pi-\bar{\theta}_t} \frac{\gamma_i(\theta_i)\sin\theta_i}{\sqrt{1-(\cos\theta_i-u_i)^2}}d\theta_i \nonumber\\
=& D_ie^{-jW_i\tau} \frac{8\bar{A}^2\sin(B\lambda/{2Md})}{\pi B(4\bar{A}^2-B^2)} \int_{a_i}^{\pi-\bar{\theta}_t}\!\!\!\! \frac{d\theta_i}{\sqrt{1-(\cos\theta_i-u_i)^2}} \nonumber\\
=& D_i\bar{C}_i e^{-jW_i\tau}\left\lbrace \frac{2\sin(\frac{{W}_0}{2}\tau)}{\pi\tau}\right. \nonumber\\
+& \left.\frac{\sin(\frac{{W}_0}{2}(\tau+2t_0))}{\pi(\tau+2t_0)} + \frac{\sin(\frac{{W}_0}{2}(\tau-2t_0))}{\pi(\tau-2t_0)} \right\rbrace ,
\end{align}
\endgroup
where $\bar{C}_i = \frac{1}{\omega_d} \int_{a_i}^{\pi-\bar{\theta}_t} \frac{d\theta_i}{\sqrt{1-(\cos\theta_i-u_i)^2}}$.

Finally, combining both \eqref{eq:R_s_3} and \eqref{eq:R_s_4}, we readily arrive at \eqref{eq:R_s_0}. This completes the proof.

%

\section{Proof of Lemma~\ref{Lemma:PSD}}{\label{Proof:PSD}}
According to \eqref{eq:R_1}, the PSD of the equivalent uplink channel is comprised of two terms
\begin{equation}\label{eq:PSD_1}
  P(\omega) = P^{(m)}(\omega)+\sum_{\substack{I_\mathrm{min}\leq i\leq I_\mathrm{max} \\ i\neq\{-1,0\}}}P_i^{(s)}(\omega),
\end{equation}
where $P^{(m)}(\omega) = \int_{-\infty}^{+\infty} R_{\tilde{g}\tilde{g}}^{(m)}(\tau)e^{-j\omega\tau} d\tau$ and $P_i^{(s)}(\omega) = \int_{-\infty}^{+\infty} R_{\tilde{g}\tilde{g},i}^{(s)}(\tau)e^{-j\omega\tau} d\tau$.

According to the theory of the Fourier transformation, we directly arrive at
\begin{equation}\label{eq:PSD_m}
  P^{(m)}(\omega) = C_0\delta(\omega) + 2C_1(1+\cos\omega t_0)X(j\omega),
\end{equation}
\begin{align}\label{eq:PSD_si}
  P_i^{(s)}(\omega) = 2{D_i}\bar{C}_i(1+\cos(2(\omega+W_i)t_0) )\bar{X}(j(\omega+W_i)),
\end{align}
where the definitions of the functions $\delta(\omega)$, $X(j\omega)$ and $\bar{X}(j\omega)$ are illustrated in Lemma \ref{Lemma:PSD}. After substituting both \eqref{eq:PSD_m} and \eqref{eq:PSD_si} into \eqref{eq:PSD_1}, we readily arrive at \eqref{eq:PSD}. This completes the proof.

\section{Proof of Lemma~\ref{Lemma:DS}}{\label{Proof:DS}}
In \eqref{eq:DS_def}, we define $\Lambda = \int_{-2\omega_d}^{2\omega_d}P(\omega)d\omega$ and $\Gamma = \int_{-2\omega_d}^{2\omega_d}\omega^2P(\omega)d\omega$. After substituting \eqref{eq:PSD} into $\Lambda$ and $\Gamma$, and performing some tedious manipulations, we have
\begin{equation}
  {\int_{-2\omega_d}^{2\omega_d}P^{(m)}(\omega)d\omega} = C_0 + 4{C_1}{W_0},
\end{equation}
\begin{equation}
\int_{-2\omega_d}^{2\omega_d} P_i^{(s)}(\omega)d\omega = 2{D_i}{\bar{C}_i}{W_0},
\end{equation}
\begin{equation}
  {\int_{-2\omega_d}^{2\omega_d} \omega^2 P^{(m)}(\omega)d\omega} = \frac{4{C_1}{W_0^3}}{3}-\frac{8{C_1}{W_0}}{t_0^2},
\end{equation}
\begin{equation}
{\int_{-2\omega_d}^{2\omega_d}\!\! \omega^2 P_i^{(s)}(\omega) d\omega} =\! \frac{{D_i}{\bar{C}_i}{W_0^3}}{6}-\frac{{D_i}{\bar{C}_i}{W_0}}{t_0^2}+2{D_i}{\bar{C}_i}{W_0}{W_i^2}.
\end{equation}

Based on the above equations, we directly arrive at \eqref{eq:DS}. This completes the proof.

\section{The proof of $\bar{C}_i=\bar{C}_{-(i+1)}$ for $i>0$}{\label{Proof:Eq2}}
Denote $\theta=\pi-\theta_i$, we can rewrite $\bar{C}_i$ for $i>0$ in \eqref{eq:Ci} as
\begin{align}
  \bar{C}_i =& \frac{1}{\omega_d} \int_{\pi-\arccos(-1+u_i)}^{\pi-\bar{\theta}_t} \frac{d\theta}{\sqrt{1-(-\cos\theta-u_i)^2}} \nonumber\\
  =& \frac{1}{\omega_d} \int_{\arccos(1-u_i)}^{\pi-\bar{\theta}_t} \frac{d\theta}{\sqrt{1-(\cos\theta+u_i)^2}}.
\end{align}

Note that $u_i=-u_{-(i+1)}$ for $i>0$. We can further rewrite $\bar{C}_i$ into
\begin{align}
  \bar{C}_i = \frac{1}{\omega_d} \int_{\arccos(1+u_{-(i+1)})}^{\pi-\bar{\theta}_t} \frac{d\theta}{\sqrt{1-(\cos\theta-u_{-(i+1)})^2}}.
\end{align}

Now, we arrive at $\bar{C}_i=\bar{C}_{-(i+1)}$ for $i>0$. This completes the proof.

\section{Relationship between $\Lambda_{1,1}$, $\Lambda_{1,2}$, $\Lambda_{1,3}$ and $M$}{\label{Proof:Lambda}}
Let us first express the first term $\Lambda_{1,1}$ in \eqref{eq:Lambda1_2} as
\begin{align}
  \Lambda_{1,1}
  = & \sum_{n=0}^{\infty} \sum_{1\leq i\leq I_1} \frac{2\Upsilon(n)}{\pi M^2}\cdot\frac{\left(1-(\frac{i}{M}+\frac{1}{2M})^2\right)^n}{(\frac{i}{M}+\frac{1}{2M})^2} \nonumber\\
  \simeq & \sum_{n=0}^{\infty} \frac{2\Upsilon(n)}{\pi M} \int_{1/M}^{1/\pi} \frac{(1-x^2)^n}{x^2}dx,
\end{align}
where the last equation holds approximately when $M$ is sufficiently large. We also relax $I_1$ as a positive real value. Since it is still hard to observe the relationship between $\Lambda_{1,1}$ and $M$, we denote $\bar{x}=\sqrt{1-x^2}$ and there is
\begin{align}\label{eq:Lambda1_3}
  \Lambda_{1,1} \simeq & \sum_{n=0}^{\infty} \frac{2\Upsilon(n)}{\pi M} \int_{\sqrt{1-1/\pi^2}}^{\sqrt{1-1/M^2}} \frac{\bar{x}^{2n}\bar{x}}{(1-\bar{x}^2)\sqrt{1-\bar{x}^2}}d\bar{x} \nonumber\\
  = & \int_{\sqrt{1-1/\pi^2}}^{\sqrt{1-1/M^2}} \frac{4}{\pi^2 M} \frac{F(\bar{x})\bar{x}}{(1-\bar{x}^2)\sqrt{1-\bar{x}^2}}d\bar{x}.
\end{align}

We relax $\Lambda_{1,1}$ to get the close-form expression for the integral in \eqref{eq:Lambda1_3}, that is
\begin{align}
  \Lambda_{1,1} < & \frac{4}{\pi^2 M} \int_{0}^{\sqrt{1-1/M^2}} \frac{F(\bar{x})\bar{x}}{(1-\bar{x}^2)\sqrt{1-1/M^2-\bar{x}^2}}d\bar{x} \nonumber\\
  = & \frac{4}{\pi^2 M} \cdot \frac{\pi}{4}M \ln\left(\frac{1+\sqrt{1-1/M^2}}{1-\sqrt{1-1/M^2}}\right) \nonumber\\
  = & \frac{1}{\pi}\ln\left(M^2(2+2\sqrt{1-1/M^2}-1/M^2)\right) \nonumber\\
  \simeq & \frac{2}{\pi}\ln{2M},
\end{align}
where the integral in the first equation is the integration of the complete elliptic integral and is directly obtained from the Equation (6.153) in \cite{Gradshteyn07Table}. Now, we see that $\Lambda_{1,1}$ increases no faster than $\frac{2}{\pi}\ln{2M}$ when increasing $M$.

The second term $\Lambda_{1,2}$ in \eqref{eq:Lambda1_2} can also be approximated as
\begin{align}
  \Lambda_{1,2} =& \sum_{n=0}^{\infty}\sum\limits_{I_1< i\leq I_2} \frac{2\pi\Upsilon(n)}{M^2}\left(1-(\frac{i}{M}+\frac{1}{2M})^2\right)^n\nonumber\\
  \simeq & \frac{1}{M}\cdot\sum_{n=0}^{\infty} 2\pi\Upsilon(n) \int_{1/\pi}^{1-1/\pi} (1-x^2)^n dx  \nonumber\\
  = & \frac{4}{M} \int_{1/\pi}^{1-1/\pi} F(\sqrt{1-x^2}) dx.
\end{align}

It is obvious that $\Lambda_{1,2}$ decreases approximately as $1/M$ when increasing $M$. Following the similar discussions, we can also arrive at the same result for $\Lambda_{1,3}$. Therefore, both $\Lambda_{1,2}$ and $\Lambda_{1,3}$ can be neglected under a large value of $M$.

\section{Relationship between $\Gamma_{2,1}$, $\Gamma_{2,2}$, $\Gamma_{2,3}$ and $M$}{\label{Proof:Gamma2}}
We rewrite $\Gamma_{2,1}$ in \eqref{eq:Gamma21} into the following integral expression when $M$ is sufficiently large
\begin{align}
  \Gamma_{2,1} =& \sum_{n=0}^{\infty} \sum_{1\leq i\leq I_1} \frac{2M\Upsilon(n)}{\pi}\frac{1}{M}\left(1-(\frac{i}{M}+\frac{1}{2M})^2\right)^n \nonumber\\
  \simeq & \sum_{n=0}^{\infty} \frac{2M\Upsilon(n)}{\pi} \int_{1/M}^{1/\pi} (1-x^2)^n dx \nonumber\\
  = & \frac{4M}{\pi^2} \int_{1/M}^{1/\pi} F(\sqrt{1-x^2}) dx.
\end{align}

Since $F(\sqrt{1-x^2})>0$ holds when $0<x<1$, $\Gamma_{2,1}$ is upper bounded by
\begin{align}
  \Gamma_{2,1} < \frac{4M}{\pi^2} \int_{0}^{1} F(\sqrt{1-x^2}) dx = M,
\end{align}
where the integral is directly obtained from the Equation (6.141) in \cite{Gradshteyn07Table}. Therefore, $\Gamma_{2,1}$ increases no faster than $M$ when increasing $M$.

Next, rewrite $\Gamma_{2,2}$ in \eqref{eq:Gamma22} into
\begingroup\makeatletter\def\f@size{9.2}\check@mathfonts
\def\maketag@@@#1{\hbox{\m@th\normalsize\normalfont#1}}%
\begin{align}\label{eq:Gamma22_1}
  \Gamma_{2,2} = & \sum_{n=0}^{\infty} \sum_{I_1< i\leq I_2} 2\pi\Upsilon(n) \left(\frac{i}{M}+\frac{1}{2M}\right)^2\left(1-(\frac{i}{M}+\frac{1}{2M})^2\right)^n \nonumber\\
  \simeq & M\cdot\sum_{n=0}^{\infty} 2\pi\Upsilon(n) \int_{1/\pi}^{1-1/\pi} x^2(1-x^2)^n dx.
\end{align}
\endgroup

It is obvious that $\Gamma_{2,2}$ is the linear function of $M$. Since $x\geq 1/\pi$ holds in the range of the integral in \eqref{eq:Gamma22_1}, we further have
\begin{align}
  \Gamma_{2,2} > & M\cdot \sum_{n=0}^{\infty} 2\pi\Upsilon(n) \int_{1/\pi}^{1-1/\pi} (1/\pi)^2(1-x^2)^n dx \nonumber\\
  = & M\cdot \frac{4}{\pi^2} \int_{1/\pi}^{1-1/\pi} F(\sqrt{1-x^2}) dx \nonumber\\
  > & M\cdot \frac{6}{\pi^2}\cdot(1-2/\pi),
\end{align}
where the last inequality holds because the complete elliptic integral satisfies $F(\sqrt{1-x^2})>1.5$ \cite{Gradshteyn07Table}. We see that the slope for the function $\Gamma_{2,2}$ is greater than $\frac{6}{\pi^2}\cdot (1-2/\pi)$.

Similarly to the analysis for $\Gamma_{2,2}$, we rewrite $\Gamma_{2,3}$ into
\begin{align}
  \Gamma_{2,3} =& \sum_{n=0}^{\infty} \sum_{I_2< i\leq I_\mathrm{max}}\!\! \frac{2\Upsilon(n)}{\pi} \frac{(\frac{i}{M}+\frac{1}{2M})^2 \left(1-(\frac{i}{M}+\frac{1}{2M})^2\right)^n}{(1-\frac{i}{M}-\frac{1}{2M})^2} \nonumber\\
  \simeq & M\cdot\sum_{n=0}^{\infty} \frac{2\Upsilon(n)}{\pi} \int_{1-1/\pi}^{2d/\lambda} \frac{x^2(1-x^2)^n}{(1-x)^2} dx \nonumber\\
  > & M\cdot \frac{4}{\pi^2}\int_{1-1/\pi}^{2d/\lambda} F(\sqrt{1-x^2}) dx \nonumber\\
  > & M\cdot \frac{6}{\pi^2}\cdot(2d/\lambda-1+1/\pi),
\end{align}
where the first inequality holds because the range of integration is $1-1/\pi<x<2d/\lambda$, thus $1-x<x$ always satisfies. Therefore, we can conclude that $\Gamma_{2,3}$ is also the linear function of $M$ and the slope is greater than $\frac{6}{\pi^2}\cdot (2d/\lambda-1+1/\pi)$.

\bibliographystyle{IEEEtran}
\bibliography{References}

\end{document}